\documentclass[a4paper,fleqn]{cas-dc}

\usepackage{amsmath}
\usepackage{amsfonts}
\usepackage{amsthm}
\usepackage[caption=false]{subfig} 
\usepackage{makecell}
\newtheorem{theorem}{Theorem}

\usepackage[numbers,sort]{natbib}
\usepackage{algorithm}
\usepackage{algpseudocode}

\def\tsc#1{\csdef{#1}{\textsc{\lowercase{#1}}\xspace}}
\tsc{WGM}
\tsc{QE}
\tsc{EP}
\tsc{PMS}
\tsc{BEC}
\tsc{DE}

\begin{document}
\let\WriteBookmarks\relax
\def\floatpagepagefraction{1}
\def\textpagefraction{.001}
\shortauthors{Gao et~al.}

\title [mode = title]{Optimal Short Video Ordering and Transmission Scheduling for Reducing Video Delivery Cost in Peer-to-Peer CDNs}                      

%

\author[1]{Zhipeng Gao}
\ead{zhpgao@bjtu.edu.cn}
\credit{Investigation, Methodology, Data analysis, Simulation, Writing – original draft}

\author[1]{Chunxi Li}

\ead{chxli1@bjtu.edu.cn}
\credit{Investigation, Methodology, Data analysis, Writing – original draft}

\author[1]{Yongxiang Zhao}
\cormark[1]

\ead{yxzhao@bjtu.edu.cn}
\credit{Investigation, Methodology, Data analysis, Writing – review \& editing}


\affiliation[1]{
organization={School of Electronic and Information Engineering, Beijing Jiaotong University},
city={Beijing},
country={China}}



\cortext[cor1]{Corresponding author}
\begin{abstract}
The explosive growth of short video platforms has generated a massive surge in global traffic, imposing heavy financial burdens on content providers. While Peer-to-Peer Content Delivery Networks (PCDNs) offer a cost-effective alternative by leveraging resource-constrained edge nodes, the limited storage and concurrent service capacities of these peers struggle to absorb the intense temporal demand spikes characteristic of short video consumption. In this paper, we propose to minimize transmission costs by exploiting a novel degree of freedom, the inherent flexibility of server-driven playback sequences. We formulate the Optimal Video Ordering and Transmission Scheduling (OVOTS) problem as an Integer Linear Program to jointly optimize personalized video ordering and transmission scheduling. By strategically permuting playlists, our approach proactively smooths temporal traffic peaks, maximizing the offloading of requests to low-cost peer nodes. To solve the OVOTS problem, we provide a rigorous theoretical reduction of the OVOTS problem to an auxiliary Minimum Cost Maximum Flow (MCMF) formulation. Leveraging König's Edge Coloring Theorem, we prove the strict equivalence of these formulations and develop the Minimum-cost Maximum-flow with Edge Coloring (MMEC) algorithm, a globally optimal, polynomial-time solution. Extensive simulations demonstrate that MMEC significantly outperforms baseline strategies, achieving cost reductions of up to 67\% compared to random scheduling and 36\% compared to a simulated annealing approach. Our results establish playback sequence flexibility as a robust and highly effective paradigm for cost optimization in PCDN architectures.

\end{abstract}



\begin{keywords}
short video streaming\sep peer-to-peer CDNs\sep transmission cost optimization\sep short video ordering \sep transmission scheduling\sep minimum cost maximum flow
\end{keywords}

\maketitle

\section{Introduction}
The explosive growth of short video platforms, such as TikTok and Instagram Reels, has revolutionized digital content consumption, leading to an unprecedented surge in global video traffic~\cite{huang2023digital,shang2025large}. To manage this surge, content providers have traditionally distributed video data through Content Delivery Networks (CDNs)~\cite{ghabashneh2020exploring,zhang2014unreeling}, incurring prohibitive costs to satisfy user demand. For instance, Douyin serves over 1 billion daily active users, with annual traffic costs exceeding 6 billion RMB~\cite{wei2023multipath}. To mitigate these substantial expenses, providers are increasingly adopting cost-effective Peer-to-peer Content Delivery Network (PCDN) architectures~\cite{wei2025qoe}. By integrating low-cost, resource-constrained peer nodes (e.g., home routers and set-top boxes) with robust but expensive traditional CDN nodes, providers can strategically offload a portion of video traffic to the peer network, thereby reducing overall transmission costs~\cite{wei2024pscheduler,farahani2023alive}.

However, reducing short video transmission costs within PCDN architectures remains challenging due to a fundamental mismatch between the bandwidth and storage limitations of low-cost peer nodes and the flash-crowd access pattern~\cite{jung2002flash} of short videos. On one hand, peer nodes are typically consumer-grade devices inherently restricted by limited upload bandwidth and storage capacity~\cite{zhang2024enhancing,wang2024twist}. For instance, over 80\% of these nodes provide an average serving speed of less than 5 Mbps, and fewer than 1\% exceed 8 Mbps~\cite{zhang2024enhancing}. Moreover, they usually rely on the residual storage of host devices, constraining their cache capacity to the order of gigabytes~\cite{liu2011novasky}. On the other hand, a short video often experiences massive concurrent requests within a brief timeframe, followed by a rapid decline in popularity~\cite{huang2023digital,huang2024digital,gao2022short}. According to our measurements in Section~\ref{background}, a short video typically loses its popularity within 30 minutes. Consequently, the concurrent service capabilities of these low-cost nodes are entirely insufficient to absorb such temporally concentrated request spikes. The excess request is inevitably forced back to costly CDN nodes, severely undermining the economic advantages of the PCDN architecture.

To reduce overall video transmission costs, current solutions primarily focus on proactive caching and optimizing transmission scheduling to maximize traffic offloading. First, proactive caching strategies~\cite{viola2020predictive,peroni2025end,tan2019online} heavily rely on predicting video popularity and then proactive cache updating. However, given the rapid shifts in short video popularity, this approach can lead to frequent cache replacements, quickly exhausting the limited I/O and bandwidth resources of peer nodes. Second, traditional transmission scheduling~\cite{zhang2024enhancing,wang2024twist,wei2025qoe} load-balances video requests across available peer nodes, while it is typically reactive to the user's fixed video viewing demand. Consequently, when confronting the massive, system-wide flash-crowd access pattern of short videos, both strategies yield limited effectiveness because they cannot alter the underlying temporal concentration of user demand.

In this paper, we approach the reduction of short video transmission costs from a novel perspective, exploiting the inherent flexibility of the short video playback sequence and transforming it into a new degree of freedom for optimization. This insight leverages the server-driven nature of short video platforms~\cite{gao2025startup,quan2023alleviating}, where a central server periodically pushes a predefined playlist to the user, determining the specific recommended videos, their playback sequence, and the designated delivery node for each video\footnote{While short video systems typically decouple recommendation engines from transmission schedulers, our framework tightly couples node designation with playlist generation to facilitate a mathematically tractable joint optimization.}~\cite{gao2022short,simonovski2025swiping,gong2022real}. Crucially, since this playback sequence is generated on the server side, the server can strategically permute the sequence of videos within a personalized playlist without altering
the actual set of recommended videos, which ensures that
the optimization remains transparent to the user. To ensure this optimization remains transparent to the user, the permutation is strictly confined to a set of equally weighted recommended content, thereby preserving the baseline Quality of Experience (QoE). Leveraging this hidden flexibility, we propose a joint optimization framework that simultaneously: 1) staggers the playback times of identical videos across different users to proactively smooth temporal demand spikes, and 2) intelligently assigns an optimal peer node to serve each video to minimize the overall transmission cost. By flattening the highly concentrated traffic peaks, our framework ensures that requests which would otherwise overflow to expensive CDN nodes can be effectively absorbed by low-cost peer nodes with limited concurrent service capacities.

To realize this approach, we abstract the core components of the short video delivery architecture, modeling it as a system comprising a centralized server acting as the controller, multiple short video users, a set of low-cost, capacity-limited peer nodes, and a robust but expensive CDN node to serve requests exceeding the peer nodes' capacities. To analytically characterize viewing demands and PCDN service capacities, we adopt a discrete time-slot model, assuming each user requests and watches one short video per time slot\footnote{This assumption is for mathematical tractability. In practice, the granularity of videos can be refined into uniform video chunks, and a time slot can be correspondingly aligned with the playback duration of a single chunk.}. Correspondingly, the concurrent service capacity of a PCDN node is defined as the maximum number of video requests it can serve within a single time slot. The central server operates based on two primary inputs, user-video recommendation sets (the specific videos intended for each user) and node-video storage information (the videos cached at each PCDN node and the concurrent service capacities of the PCDN nodes). Based on these inputs, the server generates a personalized playlist for each user that not only predefines the playback order but also assigns the designated delivery node for each video.

Based on the established system model, we formulate the Optimal Video Ordering and Transmission Scheduling (OVOTS) problem, an Integer Linear Program (ILP) aiming to minimize the overall transmission cost by determining the optimal playback sequence for each user and the corresponding node assignment for each video. To overcome the inherent computational complexity of standard ILP solvers, we rigorously prove that the OVOTS problem can be equivalently reduced to an auxiliary Minimum Cost Maximum Flow (MCMF) problem~\cite{chen2025maximum} by mapping the concurrent service capacities to link capacities within the flow network. Furthermore, leveraging König’s Edge Coloring Theorem~\cite{schrijver1998bipartite} on bipartite graphs, we demonstrate that any optimal solution to the MCMF problem can be seamlessly transformed into a optimal solution for the original OVOTS problem without loss of optimality. Guided by this theoretical foundation, we propose the Minimum-cost Maximum-flow with Edge Coloring (MMEC) algorithm, which operates in two strictly sequential phases. First, a scheduling phase solves the MCMF instance via the Successive Shortest Path algorithm based on Bellman-Ford~\cite{parimala2021bellman} to establish optimal user-node pairings; subsequently, an ordering phase employs a Kempe chain coloring algorithm~\cite{robertson1997four} on the established pairings to map videos into specific time slots (i.e., generating the final playback sequence). By exploiting the specific mathematical structure of our formulation, this decoupled approach guarantees a globally optimal polynomial-time solution, offering significant scalability advantages over traditional exponential-time ILP solvers.

Finally, to evaluate the effectiveness of the proposed joint optimization framework, we conduct extensive simulation experiments using synthetic traces that follow a Zipf popularity distribution to faithfully emulate real-world short video access patterns. Specifically, we benchmark our core algorithm, MMEC, against three representative baselines: Random Ordering with Random Transmission Scheduling (RORS), Random Ordering with Optimal Transmission Scheduling (ROOS), and Simulated Annealing Optimization (SAO). The empirical results demonstrate that MMEC consistently achieves superior performance, reducing the overall transmission cost by up to 67\%, 36\%, and 36\% compared to RORS, ROOS, and SAO, respectively. Furthermore, our analysis reveals that the proposed algorithm exhibits strong robustness to variations in critical system parameters, such as user scale, video library size, and the number of peer nodes. These findings underscore the substantial potential of joint video ordering and transmission scheduling as a critical paradigm for optimizing short video delivery cost in PCDN architectures.

The remainder of this paper is organized as follows. Section~\ref{background} introduces the background of PCDN architectures and motivates our study by analyzing the flash-crowd access patterns of short videos. Section~\ref{system_model} establishes the system model and formulates the OVOTS problem. Section~\ref{proposed_algorithm} provides a detailed theoretical analysis, including the problem reduction to the auxiliary MCMF formulation, and introduces the MMEC algorithm. Section~\ref{performance_evaluation} presents the experimental setup and evaluates algorithmic performance through numerical simulations. Finally, Section~\ref{conclusion} concludes the paper and outlines potential directions for future research.

\section{Background and Motivation}
\label{background}
\subsection{Peer-to-Peer Content Delivery Networks}

PCDNs have emerged as a highly promising architecture to alleviate overall video transmission costs~\cite{wei2024pscheduler,wei2025qoe}. With the explosive growth of short video traffic~\cite{huang2023digital,huang2024digital,shang2025large}, the enormous hardware, network bandwidth, and maintenance costs associated with traditional CDNs~\cite{ghabashneh2020exploring,zhang2014unreeling} have become an unsustainable financial burden for content providers~\cite{zhang2024enhancing,wei2023multipath}. Unlike traditional CDNs that depend on expensive commercial data centers, the core insight of a PCDN is to harvest fragmented and underutilized resources distributed at the network edge~\cite{farahani2022hybrid,jiang2009efficient}. These edge nodes primarily consist of idle computing resources leased from third-party providers, such as smart home devices, household routers, and set-top boxes. As a result, PCDNs offer significant advantages in cost efficiency.

Despite this cost-effectiveness, the resource capacity at each peer node is severely limited, posing new challenges for short video transmission. Specifically, the limitations of a single peer node are mainly reflected in two dimensions: concurrent service capacity and storage capacity. According to recent large-scale measurements in commercial PCDN systems~\cite{zhang2024enhancing,wang2024twist}, the vast majority of peer nodes exhibit remarkably low service speeds. For instance, over 80\% of the peer nodes provide an average serving speed of less than 5 Mbps, and fewer than 1\% can exceed 8 Mbps~\cite{zhang2024enhancing}. In stark contrast, dedicated CDN servers easily provide throughputs in the tens to hundreds of Gbps~\cite{nygren2010akamai}. Furthermore, unlike centralized CDN servers equipped with massive enterprise-grade disk arrays (typically in the terabyte or petabyte range)~\cite{ghabashneh2020exploring,zhang2014unreeling,nygren2010akamai}, individual peer nodes must rely entirely on the residual storage of host devices, which typically constrains their cache capacity to the order of gigabytes~\cite{liu2011novasky}.

Consequently, these fundamental dual limitations severely impede effective video transmission optimization, particularly when confronting massive traffic spikes. First, the strict concurrent bandwidth limitation implies that a single peer node can only serve a very small number of simultaneous requests. When highly popular short videos trigger a flash crowd, the limited concurrent service capacity of these peer nodes is instantly saturated, leading to severe congestion. Second, the strict storage limit dictates that a single peer node can only cache a minute fraction of the entire video library. As a result, it is physically impossible for peer nodes to maintain a stable and comprehensive working set of popular content. This inherent scarcity leads to persistently high cache miss rates, especially when user demand shifts rapidly. Whenever a requested video is absent from the local cache, the transmission must bypass the peer network and fall back to the expensive CDN nodes. This forced fallback directly nullifies the traffic offloading efficiency, ultimately driving up the overall transmission cost that the PCDN architecture was originally designed to minimize.
\begin{figure*}  
  \includegraphics[width=\textwidth]{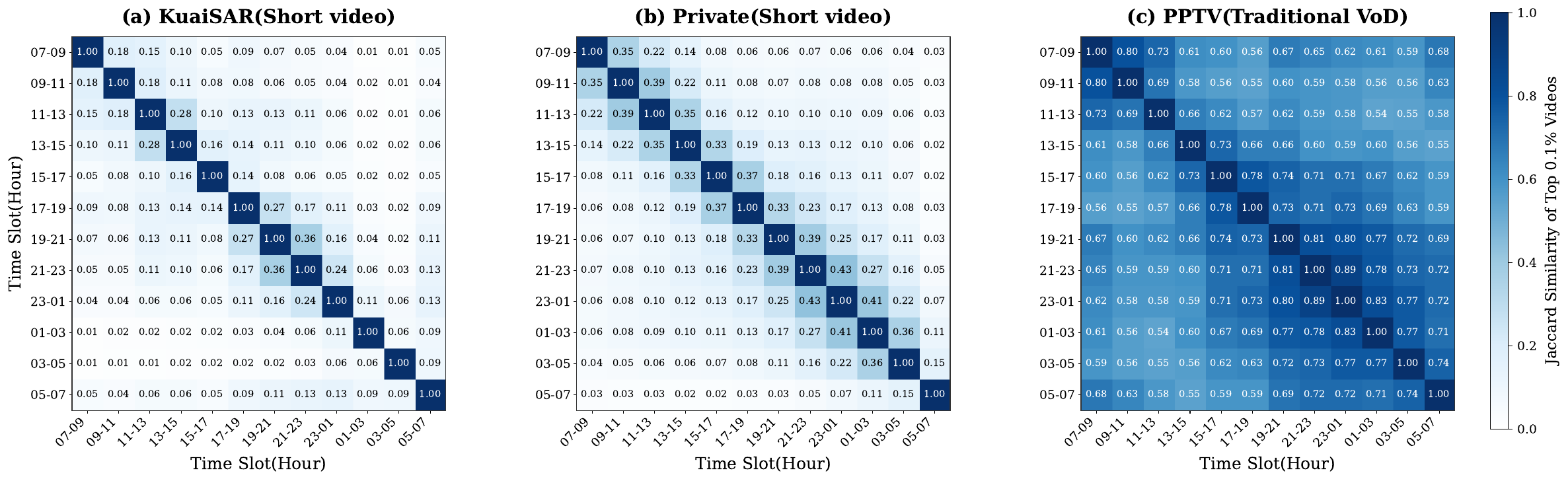}
    \caption{Jaccard similarity of the top 0.1\% most requested videos across 2-hour time slots for short video and traditional VoD datasets, illustrating the flash-crowd request pattern and rapid popularity decay of short videos compared to traditional VoD.}
    \label{fig:jaccard_heatmap_oneday}  
\end{figure*}

\begin{figure}  
  \includegraphics[width=\columnwidth]{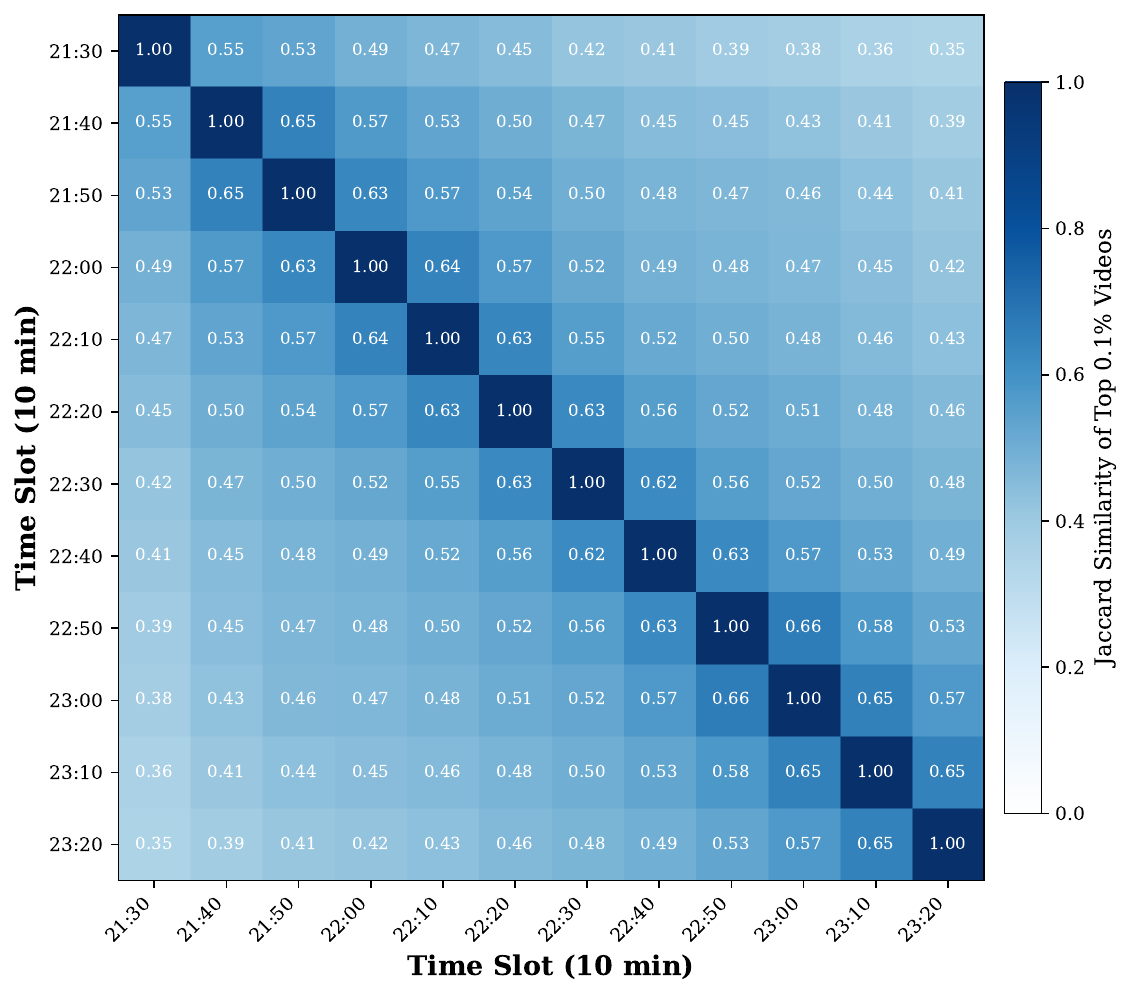}
    \caption{Jaccard similarity of the top 0.1\% most requested videos across fine-grained 10-minute time slots during the evening peak period (21:30–23:20). The similarity drops below 0.5 within just 30 minutes, demonstrating the short video flash crowds.}
    \label{fig:jaccard_peak_10min}  
\end{figure}

\begin{table}[htbp]
\caption{Basic Statistics of the 24-hour Datasets for measurement}
\label{tab:dataset_statistics}
\begin{tabular*}{\tblwidth}{@{}LLLLL@{}}
\toprule
Dataset &  Total Records & Unique Users & Unique Videos\\  

\midrule

KuaiSAR  & 770,387 & 8,398 & 379,055  \\
Private  & 24,222,188 & 255,253 & 3,198,137 \\
PPTV  & 17,873,041 & 2,712,701 & 116,252 \\
\bottomrule
\end{tabular*}
\end{table}

\subsection{The Flash-Crowd Pattern of Short Videos}
Driven by real-time recommendation engines and swipe-based continuous playback interfaces, short video platforms exhibit a pronounced flash-crowd access pattern compared to traditional Video-on-Demand (VoD) systems like YouTube~\cite{youtube} and Netflix~\cite{netflix}. Specifically, a short video often experiences massive concurrent requests within a brief time window, followed by a rapid decline in popularity. To quantitatively illustrate this characteristic, we conduct a data-driven measurement study using three distinct datasets. KuaiSAR~\cite{sun2023kuaisar} is a publicly available short video dataset collected from Kwai~\cite{kwai}; the Private dataset is a short video trace provided by a mainstream content provider; and PPTV is a traditional VoD trace from PPTV~\cite{pptv}. The fundamental statistics of these 24-hour datasets are summarized in Table~\ref{tab:dataset_statistics}.

To capture the intense temporal locality, we first divide the 24-hour period into 2-hour intervals and extract the top 0.1\% most frequently requested videos within each interval. We then compute the Jaccard similarity~\cite{bag2019efficient} of these popular video sets across different time windows. The resulting similarity matrices, presented as heatmaps in Fig.~\ref{fig:jaccard_heatmap_oneday}, reveal a striking contrast in request patterns between short video services and traditional VoD services. The Jaccard similarity in the short video datasets drops precipitously even between adjacent time slots. Conversely, in the traditional VoD dataset, the similarity remains remarkably high across the entire day. 
To further expose the microscopic severity of this flash-crowd phenomenon, we zoom in on the evening peak period (e.g., 21:30 - 23:20) of the Private dataset and evaluate the Jaccard similarity at a fine-grained 10-minute interval (as shown in Fig.~\ref{fig:jaccard_peak_10min}). Strikingly, the similarity drops below 0.5 within just 20 to 30 minutes. This empirical observation confirms that the lifespan of a short video flash crowd is incredibly short.

The combination of flash-crowd access patterns and strict storage limits renders traditional cache optimization highly ineffective. Constantly updating caches to track rapidly shifting trends induces severe cache thrashing, quickly exhausting the scarce bandwidth of peer nodes and negating the PCDN's economic benefits. Consequently, purely spatial routing or reactive caching cannot prevent massive traffic from overflowing to the expensive CDN nodes. Instead, we are motivated to exploit a unique degree of freedom inherent to short video platforms, the server-driven recommendation playlist. By intelligently permuting the temporal playback sequences of users, we intervene directly in the time domain to actively shape the demand curve, naturally smoothing out concurrent traffic peaks.

\section{System model and problem formulation}
\label{system_model}
In this section, we present the theoretical framework for our joint optimization approach. We first detail the application scenario, illustrating how optimizing playback sequences minimizes the overall transmission cost. We then formalize the system model and formulate the Optimal Video Ordering and Transmission Scheduling (OVOTS) problem as an Integer Linear Program (ILP). Finally, to overcome the inherent exponential complexity of standard ILPs, we introduce a node-splitting transformation that reduces OVOTS into a computationally tractable auxiliary formulation.

\subsection{Application scenario}

The short video streaming system studied in this paper is optimized through a joint framework of video ordering and transmission scheduling, which aims to minimize overall transmission cost by utilizing low-cost resources to serve users' video watching demands as much as possible. As shown in Fig.~\ref{fig:system_model}, the system architecture consists of a control plane and a data plane, comprising three main entities, short video servers, short video users, and PCDN nodes.

The short video server in the control plane acts as the central controller. Given user-video recommendation information and node-video storage information, the short video server executes the MMEC algorithm to generate personalized playlists for each user and determine which video to download from which storage node. Specifically, the user-video recommendation information records the video IDs that each user intends to watch, while the node-video storage information records the video IDs cached on each PCDN node and the concurrent service capacities of each PCDN node. Both are typically external inputs determined by upstream recommendation and caching algorithms, which are outside the scope of this paper.

Users download and watch short videos sequentially according to the playlist generated by the short video server. We assume that the transmission delay for distributing the playlist is negligible, as we focus on minimizing the video transmission cost in this paper. Furthermore, we assume all users consume videos at a uniform pace. The viewing process is divided into consecutive time slots (e.g., $t=1,2,3$ in Fig.~\ref{fig:system_model}). In each slot, a user downloads the required video from a pre-assigned PCDN node, watches it, and seamlessly switches to the next video via screen swiping at the end of the slot.

PCDN nodes in the data plane are categorized based on their storage capacity, concurrent service capability, and cost. The first category comprises peer nodes (e.g., $n_1, n_2$), such as home routers or set-top boxes, typically leased from third-party providers. These nodes have strictly limited storage and concurrent serving capacities, caching only a small subset of videos and serving a limited number of users concurrently per slot. The second category is the robust CDN node (e.g., $n_3$), which is assumed to possess virtually unlimited storage and bandwidth, capable of serving all users concurrently. However, the transmission cost incurred by the CDN node is significantly higher than that of peer nodes.
\begin{figure}
    \centering
    \includegraphics[width=\columnwidth]{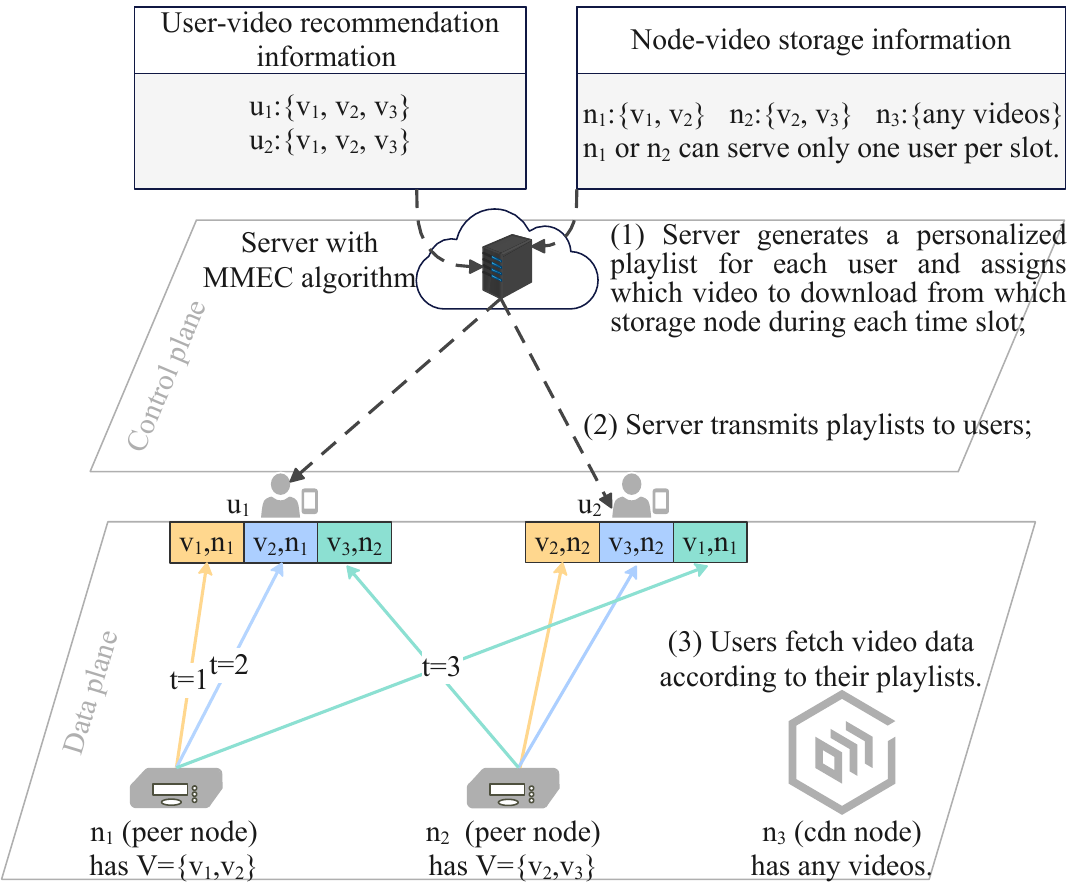}
    \caption{System architecture and motivating example of the proposed joint optimization framework.}
    \label{fig:system_model}
\end{figure}
\begin{figure}
    \centering
    \includegraphics[width=\linewidth]{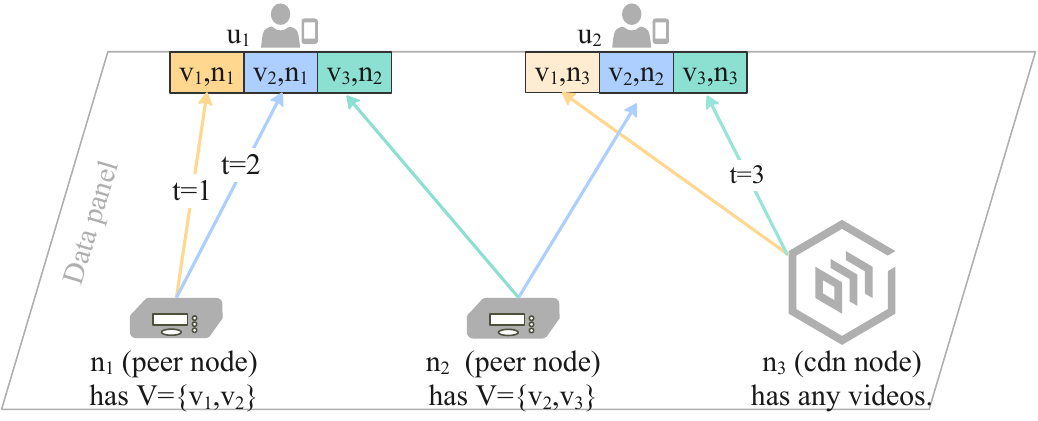}
    \caption{An example of capacity collisions under identical chronological sequences.}
    \label{fig:example}
\end{figure}

To illustrate how our joint optimization reduces service costs, consider the motivating example in Fig.~\ref{fig:system_model} and Fig.~\ref{fig:example}. Users $u_1$ and $u_2$ both intend to watch the same set of recommended videos $\{v_1, v_2, v_3\}$. If the server assigns the identical chronological sequence $<v_1, v_2, v_3>$ to both users (as shown in Fig.~\ref{fig:example}), a capacity collision occurs: at $t=1$, both attempt to fetch $v_1$, while $n_1$ can only serve one user concurrently and $n_2$ does not cache $v_1$. As a result, the server is forced to redirect $u_2$'s request to the expensive CDN node $n_3$. A similar collision also occurs at $t=3$. In contrast, by employing our proposed optimizing approach, the server reorders $u_2$'s playlist into an optimized sequence $<v_2, v_3, v_1>$ as shown in Fig.~\ref{fig:system_model}. This strategic staggering ensures that all video requests across all time slots are absorbed by the low-cost peer nodes $n_1$ and $n_2$, effectively eliminating the need for the expensive CDN node and minimizing the overall transmission cost.
  
Therefore, given such a short-video streaming system, our objective is to find the optimal playlist for each user and assign a node to each video to minimize total service cost while satisfying the storage and capacity constraints of the PCDN nodes.

\subsection{System description}

To formally model the system, we first define the sets characterizing the users and the video content. Let $\mathcal{U} = \{1, 2, \dots, U\}$ denote the set of users in the system, where $U$ represents the total number of users. The content library is represented by a set of short video files $\mathcal{V} = \{1, 2, \dots, V\}$. For ease of analysis, we assume each video has a uniform size. Furthermore, the viewing demands are captured by the user-video recommendation information, denoted by a set $\mathcal{R} = \{R_u \mid u \in \mathcal{U}\}$. Here, each element $R_u$ is a subset of $\mathcal{V}$ representing the specific videos that user $u$ intends to watch.

Next, we characterize the PCDN nodes within the data plane. We denote the set of all available nodes by $\mathcal{N} = \{0,1, 2, \dots, N\}$. To distinguish their physical properties and service costs, we define nodes $0$ through $N-1$ as low-cost peer nodes, while the last node acts as the robust, high-cost CDN node. The video caching status across the network is defined by the node-video storage information set $\mathcal{S} = \{S_n \mid n \in \mathcal{N}\}$, where each element $S_n$ specifies the subset of videos cached at node $n$. Specifically, the CDN node is assumed to store the entire video library, meaning $S_N = \mathcal{V}$. In contrast, the peer nodes have limited storage capacities and can only cache a small portion of the video data.

The operational constraints of these nodes are determined by their respective service capabilities and transmission costs. Let $c_n$ denote the transmission cost incurred when node $n$ delivers a video to a user during a single time slot. As established, the cost of the CDN node is significantly higher than that of any peer node, such that $c_N \gg c_n$ for all $n < N$. Additionally, each node $n$ is restricted by a maximum concurrent capacity $d_n$, which represents the maximum number of users it can serve simultaneously within a single time slot. Peer nodes typically have a very restricted capacity $d_n$, whereas the CDN node is assumed to possess sufficient capacity to serve all users in $\mathcal{U}$ concurrently if required, $d_N=|\mathcal{U}|$.

Finally, we model the temporal dynamics of the short video streaming system over a finite time horizon, which is partitioned into a set of discrete time slots $\mathcal{T} = \{1, 2, \dots, T\}$. Without loss of generality, to ensure the feasibility of mapping every recommended video to a distinct time slot, we assume the size of the recommendation set for each user equals the total number of time slots in the horizon, i.e., $|R_u| = T, \forall u \in \mathcal{U}$. The overall process is divided into a scheduling phase and a streaming phase. It is assumed that the scheduling phase occurs within the control plane prior to the first time slot. During this phase, the short video server processes the recommendation information $\mathcal{R}$, the storage information $\mathcal{S}$, and concurrent capacity $d_n$ to generate personalized playlists and node assignments globally. Subsequently, the actual video streaming occurs from $t=1$ to $T$. Throughout these time slots, users download and consume their assigned videos sequentially and at a uniform pace, switching to the next video at the end of each slot as dictated by the pre-computed schedule.

\begin{table}[htbp]
\centering
\caption{Symbols used.}
\label{tab:symbols}
\begin{tabular*}{\tblwidth}{@{}LL@{}}
\toprule
\textbf{Symbol} & \textbf{Description} \\ \midrule
$\mathcal{U}$   & Set of short video users\\
$\mathcal{V}$   & Set of short videos\\
$\mathcal{R}$   & User-video recommendation information\\
$R_u$           & Set of videos that user $u$ intends to watch\\
$\mathcal{N}$   & Set of PCDN nodes\\
$\mathcal{S}$   & Node-video storage information\\
$\mathcal{T}$   & Set of discrete time slots \\
$S_n$           & Set of videos stored at PCDN node $n$ \\
$c_n$           & \makecell[l]{Transmission cost of PCDN node $n$ \\ per video per time slot } \\
$d_n$           & \makecell[l]{Maximum concurrent transmission capacity\\ of node $n$} \\

\bottomrule
\end{tabular*}
\end{table}

\subsection{Problem formulation}

In this section, we formulate the Optimal Video Ordering and Transmission Scheduling (OVOTS) problem, which aims to jointly determine the optimal playback sequence for each user and assign a target node to serve each video request, thereby minimizing the total delivery cost. 

To mathematically formulate this problem, we introduce a binary decision variable $x_{u,v,n,t} \in \{0, 1\}$. Specifically, $x_{u,v,n,t} = 1$ indicates that the central server schedules video $v$ at the $t$-th position of user $u$'s playlist and instructs the user to fetch this video from node $n$ during time slot $t$, and $x_{u,v,n,t} = 0$ otherwise. The decision variable $x_{u,v,n,t}$ is primarily subject to the users' viewing demands as well as the storage and concurrent service capacities of the PCDN nodes.

To ensure the validity of the generated playlists, the server can schedule a video into a user's playlist only if it is explicitly included in that user's recommendation set. This requirement imposes the following bounding constraint.
\begin{equation}
    x_{u,v,n,t} \le \mathbb{I}(v\in R_u), \quad \forall u\in \mathcal{U}, v \in \mathcal{V}, n \in \mathcal{N}, t \in \mathcal{T}
    \label{constraint_1}
\end{equation}
where $\mathbb{I}(\cdot)$ is an indicator function that returns $1$ if the condition holds and $0$ otherwise. Based on the uniform playback pace assumption, each user consumes exactly one short video per time slot. Consequently, for any user $u$ at any time slot $t$, exactly one video must be fetched from exactly one assigned node, formulated as Eq.~\eqref{constraint_2}.
\begin{equation}
    \sum_{v\in \mathcal{V}} \sum_{n\in \mathcal{N}} x_{u,v,n,t} = 1, \quad \forall u\in \mathcal{U}, t\in \mathcal{T}
    \label{constraint_2}
\end{equation}
Furthermore, to strictly adhere to the user's recommendation set without any omission or duplication, every video $v$ within user $u$'s recommendation set $R_u$ must be scheduled and downloaded exactly once across the entire time horizon $\mathcal{T}$. This is mathematically formulated as Eq.~\eqref{constraint_3}.
\begin{equation}
    \sum_{t\in \mathcal{T}} \sum_{n\in \mathcal{N}} x_{u,v,n,t} = \mathbb{I}(v\in R_u), \quad \forall u\in \mathcal{U}, v\in \mathcal{V}
    \label{constraint_3}
\end{equation}

From the perspective of video availability, a node $n$ can be assigned to serve video $v$ only if the video is physically cached in its local storage $S_n$, which can be formulated  as follows.
\begin{equation}
    x_{u,v,n,t} \le \mathbb{I}(v\in S_{n}), \quad \forall u\in \mathcal{U}, v \in \mathcal{V}, n \in \mathcal{N}, t \in \mathcal{T}
    \label{constraint_4}    
\end{equation}
Crucially, to strictly adhere to the concurrent service limits, the total number of concurrent users served by any PCDN node $n$ during any single time slot $t$ cannot exceed its maximum concurrent service capacity $d_n$. Regardless of which specific videos are being requested, this capacity constraint is strictly guaranteed by Eq.~\eqref{constraint_5}
\begin{equation}
    \sum_{u\in \mathcal{U}} \sum_{v\in \mathcal{V}} x_{u,v,n,t} \le d_n, \quad \forall n\in \mathcal{N}, t\in \mathcal{T}
    \label{constraint_5}    
\end{equation}

Given a valid joint ordering and scheduling policy defined by the matrix $\boldsymbol{X} = \{x_{u,v,n,t}\}$, the global video delivery cost is the sum of the transmission costs incurred by all nodes serving the requests across all time slots. By denoting this total cost as $C_{\boldsymbol{X}} = \sum_{u\in \mathcal{U}} \sum_{v \in \mathcal{V}} \sum_{n\in \mathcal{N}} \sum_{t\in \mathcal{T}} c_n x_{u,v,n,t}$, the OVOTS problem can be ultimately formulated as the following Integer Linear Program (ILP) problem.
\begin{equation}
\begin{aligned}
    \mathcal{P}: \quad &\min_{\boldsymbol{X}} \quad C_{\boldsymbol{X}} \\
    s.t.\quad
    &x_{u,v,n,t}\in\{0,1\}, \quad \forall u \in \mathcal{U}, v\in \mathcal{V}, n\in \mathcal{N}, t\in \mathcal{T} \\ 
    &\text{Eqs.~\eqref{constraint_1}} - \text{~\eqref{constraint_5}} 
\end{aligned}
\end{equation}
\textbf{Problem Equivalent Transformation:} Solving the problem $\mathcal{P}$ as formulated above directly is computationally intractable due to the inherent exponential complexity of standard ILP solvers~\cite{Gurobi,OR-tools}. To facilitate the design of a polynomial-time solution, we transform the optimization problem $\mathcal{P}$ into an equivalent auxiliary problem $\mathcal{P}'$ by applying a node splitting transformation on the physical PCDN nodes. Specifically, any physical node $n \in \mathcal{N}$ characterized by a maximum concurrent capacity $d_n$, a storage profile $S_n$, and a unit transmission cost $c_n$ is expanded into a collection of $d_n$ independent virtual nodes.  
Each virtual node $n' \in \mathcal{N}'$ inherits the identical storage profile $S_{n'} = S_n$ and transmission cost $c_{n'} = c_n$ from its original physical node $n$, but importantly, possesses a strict unit concurrent service capacity $d_{n'} = 1$. 
Let $\mathcal{N}'$ denote the expanded set of these virtual nodes, sized at $N' = \sum_{n \in \mathcal{N}} d_n$. By redefining the decision variable as $x_{u,v,n',t}$ over the virtual node set $\mathcal{N}'$, and redefining the cost function as $C_{\boldsymbol{X}} = \sum_{u\in \mathcal{U}} \sum_{v \in \mathcal{V}} \sum_{n'\in \mathcal{N}'} \sum_{t\in \mathcal{T}} c_{n'} x_{u,v,n',t}$, the problem $\mathcal{P'}$ can be formulated as follows.
\begin{equation}
    \begin{aligned}
    \mathcal{P'}: \quad &\min_{\boldsymbol{X}} \quad C_{\boldsymbol{X}} \\
    s.t.\quad
    &x_{u,v,n',t}\in\{0,1\}, \quad \forall u \in \mathcal{U}, v\in \mathcal{V}, n'\in \mathcal{N'}, t\in \mathcal{T} \\ 
     &x_{u,v,n',t} \le \mathbb{I}(v\in R_u),\forall u \in \mathcal{U}, v\in \mathcal{V}, n'\in \mathcal{N'}, t\in \mathcal{T} \\
     &\sum_{v\in \mathcal{V}} \sum_{n'\in \mathcal{N'}} x_{u,v,n',t} = 1, \quad \forall u\in \mathcal{U}, t\in \mathcal{T}\\
     &\sum_{t\in \mathcal{T}} \sum_{n'\in \mathcal{N'}} x_{u,v,n',t} = \mathbb{I}(v\in R_u), \quad \forall u\in \mathcal{U}, v\in \mathcal{V}\\
      &x_{u,v,n',t} \le \mathbb{I}(v\in S_{n'}), \forall u\in \mathcal{U}, v \in \mathcal{V}, n' \in \mathcal{N'}, t \in \mathcal{T}\\
    & \sum_{u\in \mathcal{U}} \sum_{v\in \mathcal{V}} x_{u,v,n',t} \le 1, \quad \forall n'\in \mathcal{N}', t\in \mathcal{T}\\   
\end{aligned}
\end{equation}
Problem $\mathcal{P}'$ mathematically preserves all feasible solutions and the exact optimal objective value of $\mathcal{P}$. This structural transformation effectively maps the complex multi-capacity scheduling problem into a tractable bipartite assignment structure, directly enabling the graph-theoretic reduction presented in the subsequent section.

\section{Theoretical Analysis and Proposed Algorithm }
\label{proposed_algorithm}

In this section, we propose a reduction-based algorithmic framework to efficiently solve the transformed optimization problem $\mathcal{P}'$. The core of our approach lies in mapping the integer linear program into an auxiliary Minimum Cost Maximum Flow (MCMF) problem. We theoretically prove that the optimal objective value of this MCMF instance is strictly equivalent to that of $\mathcal{P}'$. Building upon this rigorous equivalence, we introduce a polynomial-time, two-phase algorithm, Minimum-cost Maximum-flow with Edge Coloring (MMEC), that reconstructs an optimal video ordering and transmission scheduling.

\subsection{MCMF Problem Construction}
\label{MCMF_problem}

To resolve the time-coupling nature of the joint optimization problem $\mathcal{P'}$, we temporarily relax the discrete time-slot constraints and aggregate the system capacities over the entire time horizon to construct a directed flow network $\mathcal{G} = (\mathcal{K}, \mathcal{E})$. 

\begin{figure}
    \centering
    \includegraphics[width=\linewidth]{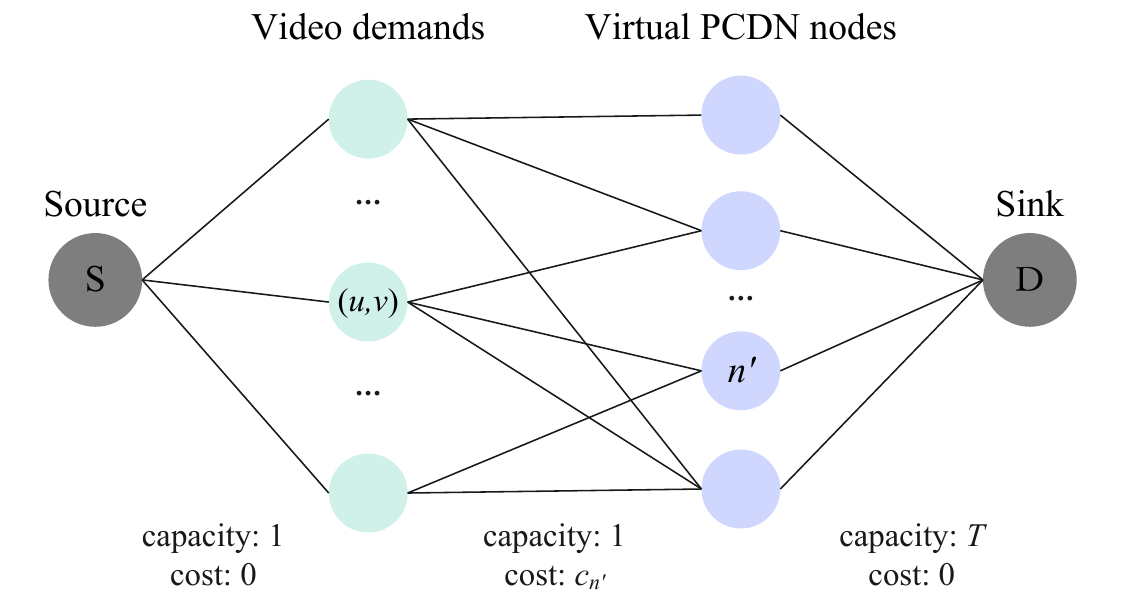}
    \caption{Auxiliary flow network representing the aggregated video delivery requests and virtual node capacities.}
    \label{fig:flow_network}
\end{figure}

As illustrated in Fig.~\ref{fig:flow_network}, the vertex set $\mathcal{K}$ comprises a source $S$, a sink $D$, a set of demand nodes $\{(u,v) \mid u \in \mathcal{U}, v \in R_u\}$ representing the video requests of each user, and the expanded virtual nodes set $\mathcal{N}'$. The directed edge set $\mathcal{E}$ is constructed based on the following physical mapping rules:

\begin{itemize}

    \item \textbf{Demand Generation Edges:} For each user $u \in \mathcal{U}$ and each explicitly recommended video $v \in R_u$, we draw an edge $S \to (u,v)$ with a capacity of $1$ and a cost of $0$. This rigorously restricts each recommended video to be downloaded exactly once by the corresponding user.
    \item \textbf{Video Transmission Edges:} An edge $(u,v) \to n'$ exists if and only if video $v$ is physically cached at the virtual node $n'$ (i.e., $v \in S_{n'}$). Each such edge is assigned a capacity of $1$ and a transmission cost of $c_{n'}$.
    \item \textbf{Node Capacity Edges:} Each virtual PCDN node $n' \in \mathcal{N}'$ is connected to the sink $D$ via an edge $n' \to D$ with a capacity of $T$ and a cost of $0$. This elegantly enforces that a virtual PCDN node, possessing a strictly unit concurrent capacity per time slot, can serve at most $T$ video requests aggregately over the entire time horizon $\mathcal{T}$.
\end{itemize}

Let $\mathcal{F}$ denote the MCMF problem defined on the network $\mathcal{G}$, which seeks to push the exact total required flow (i.e., $\sum_{u \in \mathcal{U}} |R_u|$) from the source $S$ to the sink $D$ while minimizing the overall transmission cost~\cite{ford1962flows}.

\subsection{Equivalence Analysis}
\label{Equivalence_Analysis}

We now establish the optimality of our reduction by proving the rigorous equivalence between the auxiliary MCMF problem $\mathcal{F}$ and the transformed scheduling problem $\mathcal{P}'$.

\begin{theorem}
Let $OPT_{\mathcal{P}'}$ be the minimum overall delivery cost of the scheduling problem $\mathcal{P}'$, and let $OPT_{\mathcal{F}}$ be the minimum cost of the constructed network flow problem $\mathcal{F}$. Then, $OPT_{\mathcal{P}'} = OPT_{\mathcal{F}}$.
\end{theorem}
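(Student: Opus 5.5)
We prove the equality by establishing the two inequalities $OPT_{\mathcal{F}} \le OPT_{\mathcal{P}'}$ and $OPT_{\mathcal{P}'} \le OPT_{\mathcal{F}}$ separately. The first is a projection of schedules onto flows; the second is a reconstruction of schedules from flows via König's Edge Coloring Theorem. Throughout, we use that $\mathcal{F}$ has a feasible flow of value $\sum_u |R_u| = UT$: this holds because the CDN node, split into $d_N = U$ virtual copies each with capacity $T$, stores every video. Since all capacities are integral, we may moreover restrict attention to integral optimal flows by the integrality theorem for min-cost flow.

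\textbf{Step 1 ($OPT_{\mathcal{F}} \le OPT_{\mathcal{P}'}$).} Take any feasible $\boldsymbol{X}$ for $\mathcal{P}'$ and define the flow on $(u,v)\to n'$ as $f_{(u,v),n'} = \sum_{t} x_{u,v,n',t}$.
\begin{itemize}
\item By the per-request constraint, $\sum_{n'} f_{(u,v),n'} = 1$ for $v\in R_u$. So each demand edge $S\to(u,v)$ carries unit flow, and each $f_{(u,v),n'}$ lies in $\{0,1\}$.
\item The storage constraint ensures flow only uses existing edges with $v\in S_{n'}$.
\item Summing the unit-capacity constraint over $t$ gives $\sum_{(u,v)} f_{(u,v),n'} \le T$, which respects the edge $n'\to D$.
\end{itemize}
The flow value is $UT$ and its cost equals $C_{\boldsymbol{X}}$. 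Applying this to an optimal $\boldsymbol{X}$ yields the inequality.

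\textbf{Step 2 ($OPT_{\mathcal{P}'} \le OPT_{\mathcal{F}}$).} Take an integral optimal flow $f$. Build a bipartite multigraph $B$ with left side $\mathcal{U}$ and right side $\mathcal{N}'$. For each $(u,v)$ with $f_{(u,v),n'}=1$, add an edge $u$--$n'$ labelled $v$. Each request has exactly one outgoing unit, so each $u$ has degree exactly $|R_u| = T$. Each $n'$ has degree at most $T$ by the capacity of $n'\to D$. Hence the maximum degree of $B$ is $T$, and König's Edge Coloring Theorem gives a proper edge coloring with $T$ colors, which we identify with the slots $\mathcal{T}$. Set $x_{u,v,n',t}=1$ iff the edge $u$--$n'$ labelled $v$ receives color $t$. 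We then check each constraint of $\mathcal{P}'$:
\begin{itemize}
\item Properness at $u$, combined with $\deg(u)=T$, means the $T$ edges at $u$ use all $T$ colors exactly once. This gives exactly one video per slot per user.
\item Each $(u,v)$ corresponds to exactly one edge, so each recommended video is scheduled exactly once, and only recommended videos appear.
\item Storage holds because edges exist only when $v\in S_{n'}$.
\item Properness at $n'$ gives at most one service per slot per virtual node.
\end{itemize}
The cost equals the flow cost, so $OPT_{\mathcal{P}'} \le OPT_{\mathcal{F}}$.

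\textbf{Main obstacle.} The delicate step is Step 2, specifically verifying that the aggregated constraint (at most $T$ services per virtual node) can be disaggregated into per-slot unit capacities. This is exactly what König's theorem provides, but it requires two points of care. First, we need integrality of the optimal flow so that $B$ is a genuine multigraph. Second, $B$ may contain parallel edges, since one user can be served several different videos by the same virtual node; König's theorem applies to bipartite multigraphs, so this is fine. We also need the user degrees to be exactly $T$, not merely at most $T$, so that every slot of every user is filled; this comes from saturation of all demand edges in the maximum flow.
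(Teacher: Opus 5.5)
Your proof is correct and follows the same route as the paper. For $OPT_{\mathcal{F}} \le OPT_{\mathcal{P}'}$ you project a schedule onto a flow by summing over slots; for the reverse inequality you build the user--virtual-node bipartite multigraph from an integral optimal flow, $T$-edge-color it via K\H{o}nig's theorem, and map colors to slots. You are also slightly more careful than the paper: you justify that $\mathcal{F}$ is feasible (via the $U$ CDN copies) and that the optimal flow can be taken integral. You also note that degree exactly $T$ at each user forces every color to appear there exactly once, so each slot is filled rather than just used at most once.
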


\begin{proof}
The proof relies on establishing a bidirectional bounding relationship. We first prove $OPT_{\mathcal{F}} \le OPT_{\mathcal{P}'}$ by demonstrating that any optimal solution for problem $\mathcal{P}'$ corresponds to a feasible flow in problem $\mathcal{F}$ with the same cost. Subsequently, we prove $OPT_{\mathcal{P}'} \le OPT_{\mathcal{F}}$ by showing that any optimal flow for $\mathcal{F}$ can be losslessly decomposed into a feasible scheduling matrix for $\mathcal{P}'$.

\noindent \textbf{(1) $OPT_{\mathcal{F}} \le OPT_{\mathcal{P}'}$.} 

\textbf{Solution Construction:} Given an optimal scheduling matrix $\boldsymbol{X}^*=\{x^*_{u,v,n',t} \mid u \in \mathcal{U}, v\in \mathcal{V}, n'\in \mathcal{N}', t\in \mathcal{T}\}$ for problem $\mathcal{P}'$, we construct a set of flows $F$ on network $\mathcal{G}$. For every $x^*_{u,v,n',t} = 1$, we route one unit of flow along the path $S \to (u,v) \to n' \to D$. Thus, the total flow mapped to the network is $F = \{f(S \to (u,v) \to n' \to D) = 1 \mid x^*_{u,v,n',t} = 1\}$.

\textbf{Constraints Satisfaction \& Cost Match:} First, the flows in $F$ do not violate the unit capacity of the Demand Generation Edges $S \to (u,v)$, because the constraints of $\mathcal{P}'$ ensure each user fetches each recommended video exactly once across $\mathcal{T}$. Second, the flows do not violate the unit capacity of the Video Transmission Edges $(u,v) \to n'$, since each specific video request is served by exactly one assigned virtual node. Third, the flows do not violate the aggregated capacity $T$ of the Node Capacity Edges $n' \to D$, as the capacity constraint in $\mathcal{P}'$ dictates that each virtual node $n'$ serves at most one user per time slot, cumulatively serving at most $T$ requests over $T$ time slots. Since the flow mapping is strictly one-to-one with the decision variables, the total flow cost exactly equals the scheduling cost, i.e., $Cost(F) = C_{\boldsymbol{X}^*} = OPT_{\mathcal{P}'}$. Because $F$ is a valid feasible solution for $\mathcal{F}$, the optimal cost of $\mathcal{F}$ can be no greater than the cost of $F$. Thus, $OPT_{\mathcal{F}} \le OPT_{\mathcal{P}'}$.

\noindent \textbf{(2) $OPT_{\mathcal{P}'} \le OPT_{\mathcal{F}}$.}

\begin{figure}
    \centering    \includegraphics[width=0.7\linewidth]{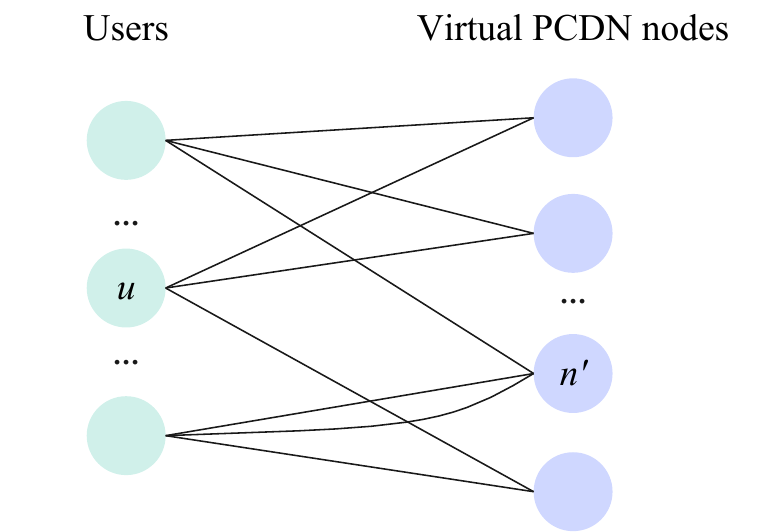}
    \caption{Bipartite multigraph $\mathcal{G_B}$ constructed from the optimal flow $F^*$. Parallel edges represent multiple distinct video requests between a user and a virtual PCDN node.}
    \label{fig:bipartite_multigraph}
\end{figure}
\textbf{Solution Construction:} Given an optimal integer flow solution $F^*=\{f^*(S \to (u,v) \to n' \to D) = 1\}$ for problem $\mathcal{F}$, we construct a bipartite multigraph $\mathcal{G_B} = (\mathcal{U}, \mathcal{N}', \mathcal{E}_B)$ as shown in Fig.~\ref{fig:bipartite_multigraph}, where $\mathcal{U}$ represents the users and $\mathcal{N}'$ represents the virtual PCDN nodes. For each unit flow $f^*(S \to (u, v) \to n' \to D) = 1$ in $F^*$, we add an edge connecting $u \in \mathcal{U}$ and $n' \in \mathcal{N}'$, labeled uniquely by the video $v$ as $e_{u,v,n'}$.

\textbf{Constraints Satisfaction \& Cost Match:} In the bipartite multigraph $\mathcal{G_B}$, the degree of each vertex $u \in \mathcal{U}$ is exactly $T$, because each user needs to download and watch $T$ videos. Furthermore, the degree of each vertex $n' \in \mathcal{N}'$ cannot exceed $T$ due to the strict flow capacity of edge $n' \to D$. Thus, the maximum degree of $\mathcal{G_B}$ is bounded by $\Delta(\mathcal{G_B}) \le T$. According to \textit{K\H{o}nig's Edge Coloring Theorem}, any bipartite multigraph with a maximum degree $\Delta$ is $\Delta$-edge-colorable. This theoretically guarantees that we can color all edges in $\mathcal{G_B}$ using a set of exactly $T$ colors, such that no two adjacent edges share the same color. 

By bijectively mapping the $T$ colors to the $T$ discrete time slots in $\mathcal{T}$, we set the decision variable $x_{u,v,n',t} = 1$ if edge $e_{u,v,n'}$ is colored with color $t$, which yields a complete scheduling matrix $\boldsymbol{X}$. K\H{o}nig's theorem rigorously ensures that no user fetches more than one video per slot, and no virtual PCDN node serves more than one user per slot, satisfying all constraints of $\mathcal{P}'$. Since the physical routing paths remain unchanged during the coloring process, the transmission cost is perfectly preserved($C_{\boldsymbol{X}} = Cost(F^*) = OPT_{\mathcal{F}}$). Therefore, $\boldsymbol{X}$ is a feasible solution for $\mathcal{P}'$, implying $OPT_{\mathcal{P}'} \le OPT_{\mathcal{F}}$.

Combining the bounds from (1) and (2), we conclusively establish $OPT_{\mathcal{P}'} = OPT_{\mathcal{F}}$.
\end{proof}

\subsection{Algorithm Description}

Guided by the theoretical foundation established in the previous subsections, we propose the Minimum-cost Maximum-flow with Edge Coloring (MMEC) algorithm. As outlined in Algorithm 1, the framework efficiently decouples the joint optimization problem into two strictly sequential phases: a flow-based transmission scheduling phase and a coloring-based video ordering phase.

The first phase (lines 3-13) focuses on determining the optimal assignment of video requests to virtual PCDN nodes to minimize the aggregate transmission cost. Initially, MMEC constructs the auxiliary network $\mathcal{G}$ and initializes a residual graph $\mathcal{G}_r$ along with an empty flow set $F$. Then, MMEC iteratively finds the minimum cost path on $\mathcal{G}_r$ until no further valid paths exist. During each iteration, the Bellman-Ford algorithm is employed to identify a path from the source $S$ to the sink $D$ with the minimum cost. Subsequently, a unit flow along this minimum cost path, denoted as $f(S \to (u,v) \to n' \to D) = 1$, is added to the set $F$. This mathematically establishes the optimal user-node pairing, representing that virtual node $n'$ is permanently assigned to serve user $u$'s request for video $v$. The residual graph $\mathcal{G}_r$ is updated accordingly by decrementing the capacities of the corresponding edges. Finally, MMEC outputs the flow set $F$ as the optimal solution for the MCMF problem.

Although Phase 1 optimally resolves the transmission scheduling (i.e., which virtual PCDN nodes serve which video demands), it does not specify the final playback order. Phase 2 (lines 14-20) resolves this temporal mapping by transforming the optimal flow $F$ into a feasible, collision-free time-slotted schedule. MMEC first constructs a bipartite multigraph $\mathcal{G_B} = (\mathcal{U}, \mathcal{N'}, \mathcal{E'})$, where the vertex sets $\mathcal{U}$ and $\mathcal{N'}$ represent the users and virtual PCDN nodes, respectively. For every unit flow in $F$ that assigns video $v$ to virtual node $n'$ for user $u$, a unique edge $e_{u,v,n'}$ is added to $\mathcal{E'}$. Subsequently, for each edge in $\mathcal{G_B}$, MMEC uses the Kempe chain Edge Coloring algorithm to identify and assign an available color $t \in \mathcal{T}$, ensuring no adjacent edges share the same color. Finally, MMEC outputs the scheduling matrix $\boldsymbol{X}$, where $x_{u,v,n',t}=1$ directly indicates that video $v$ is ordered to be fetched by user $u$ from virtual node $n'$ strictly during time slot $t$.

Finally, to generate the solution for the original problem$\mathcal{P}$, MMEC performs a mapping step in lines 21-25. It naturally projects the virtual scheduling matrix $\boldsymbol{X}'$ back to the physical domain by restoring each virtual node $n'$ to its physical parent node $n$. Consequently, it outputs the final joint scheduling matrix $\boldsymbol{X} = \{x_{u,v,n,t}\}$, directly providing the optimal solution for the original problem $\mathcal{P}$.

\begin{algorithm}
\caption{Minimum-cost Maximum-flow with Edge Coloring (MMEC)}
\begin{algorithmic}[1]
\Require User-video recommendation sets $\{R_u\}$, Physical node storage profiles $\{S_n\}$, costs $c_n$, and capacities $d_n$
\Ensure Optimal joint scheduling matrix $\boldsymbol{X}$ for problem $\mathcal{P}$

\State Expand physical nodes $\mathcal{N}$ into virtual nodes $\mathcal{N}'$ where $N' = \sum_{n \in \mathcal{N}} d_n$
\State Initialize scheduling matrix $\boldsymbol{X}' \leftarrow \emptyset$

\State \textbf{Phase 1: Flow-based Transmission Scheduling (Solving MCMF)}
\State Construct network graph $\mathcal{G}$ based on virtual nodes $\mathcal{N}'$ as defined in Sec.~\ref{MCMF_problem}
\State Initialize flow set $F \leftarrow \emptyset$ 
\State Initialize residual graph $\mathcal{G}_r \leftarrow \mathcal{G}$
\While {True} 
    \State Find a minimum cost path $S\to (u,v)\to n'\to D$ in $\mathcal{G}_r$ using Bellman-Ford
    \If{no path exists} \textbf{break} \EndIf
    \State $F \leftarrow F \cup \{f(S\to (u,v)\to n'\to D)=1\}$
    \State Update capacities in residual graph $\mathcal{G}_r$
\EndWhile

\State \textbf{Phase 2: Coloring-based Video Ordering (Solving Edge Coloring)}
\State Construct bipartite multigraph $\mathcal{G_B} = (\mathcal{U}, \mathcal{N'}, \mathcal{E'})$ based on flow set $F$
\State Initialize color set representing time slots $\mathcal{T}=\{1,2,\dots,T\}$
\For{each edge $e_{u,v,n'}$ in $\mathcal{G_B}$}
   \State Assign a color $t \in \mathcal{T}$ using Kempe chain Coloring Algorithm
   \State $\boldsymbol{X}' \leftarrow \boldsymbol{X}' \cup \{x_{u,v,n',t}=1\}$
\EndFor

\State Initialize final physical scheduling matrix $\boldsymbol{X} \leftarrow \emptyset$
\For{each $x_{u,v,n',t}=1 \in \boldsymbol{X}'$}
   \State Map virtual node $n'$ back to its physical node $n$
   \State $\boldsymbol{X} \leftarrow \boldsymbol{X} \cup \{x_{u,v,n,t}=1\}$
\EndFor
\State \Return $\boldsymbol{X}$

\end{algorithmic}
\end{algorithm}

\subsection{Complexity Analysis}
To provide a comprehensive theoretical foundation for the proposed MMEC algorithm, we rigorously evaluate its computational efficiency. 
The overall complexity is determined by the two sequential phases. 

The computational complexity of Phase 1 is governed by the scale of the auxiliary flow network $\mathcal{G} = (\mathcal{K}, \mathcal{E})$. Specifically, the vertex set size is bounded by $|\mathcal{K}| = UT + N' + 2 = O(UT + N')$, comprising the video demand nodes, virtual storage nodes, source, and sink. The edge set size is bounded by the sum of demand generation edges, video transmission edges, and node capacity edges. Since the number of transmission edges $|\mathcal{E}_{trans}|$ is strictly bounded by the product of demand nodes and virtual nodes ($|\mathcal{E}_{trans}| \le UTN'$), the total edge count is $|\mathcal{E}| = O(UTN')$.
The Successive Shortest Path process utilizing the Bellman-Ford algorithm is performed exactly $UT$ times (once for each unit of required flow). Since a single Bellman-Ford execution takes $O(|\mathcal{K}| \cdot |\mathcal{E}|) = O(U^2T^2N' + UTN'^2)$ time, the cumulative complexity for Phase 1 reaches $O(U^3T^3N' + U^2T^2N'^2)$.
Phase 2 involves coloring the edges of the constructed bipartite multigraph $\mathcal{G}_B = (\mathcal{U}, \mathcal{N}', \mathcal{E}')$. The exact number of edges in this graph is $|\mathcal{E}'| = UT$. In the worst-case scenario, the Kempe chain coloring algorithm resolves a single edge conflict in $O(U + N')$ time by searching for an alternating path. Executing this for all $UT$ edges yields a total complexity of $O(U^2T + UTN')$ for Phase 2.

Combining all operational phases, the global computational complexity of the MMEC algorithm is strictly dominated by Phase 1 (solving the MCMF problem). Consequently, the total time complexity is bounded by $O(U^3T^3N' + U^2T^2N'^2)$.

\section{Performance Evaluation}
\label{performance_evaluation}

In this section, we conduct a comprehensive evaluation of the proposed MMEC algorithm through a series of numerical simulations. We first define the experimental environment, data generation models, and parameter settings. Subsequently, we introduce the baseline algorithms and analyze the performance gains of our joint video ordering and transmission scheduling framework under various system parameter settings.

\subsection{Experimental Setup}

\textbf{Problem instance generation.} To ensure clarity and tractability in our core performance evaluation, we construct a homogeneous PCDN, where all peer nodes are characterized by uniform concurrent serving capacities (denoted by $d$), identical storage limits (denoted by $s$), and equivalent unit transmission costs\footnote{Supplementary experiments demonstrating the robust applicability of our framework under heterogeneous peer node architectures are thoroughly detailed in Appendix~\ref{heterogeneous_peer_node}.}. Specifically, the unit transmission cost for any regular peer node is normalized to $c_n = 1$ for all $n \in \{1, \dots, N-1\}$, whereas the unit cost of fetching from the central CDN node is penalized at $c_N = 5$ to reflect the expensive backbone bandwidth consumption. 

Within this PCDN, the global delivery cost is heavily dictated by the intricate coupling between the physical video caching and the user demand patterns. To generate realistic user-video requests, we assume the popularity of the entire video library follows a Zipf distribution with a skewness parameter $\alpha$, strictly mirroring the highly concentrated viewing behaviors observed in real-world short video platforms. Based on this distribution, we independently sample exactly $T$ videos for each user $u$ to construct their personalized viewing demand set $R_u$. 

Concurrently, we implement a popularity-sorted cyclic placement strategy to generate the set of videos cached in each peer node, $\mathcal{S}$. Specifically, we rank the video library in descending order of popularity and sequentially distribute these videos to the peer nodes in a cyclic manner. This structural initialization achieves two critical objectives: first, it aims to guarantee that every video is explicitly cached by at least one low-cost peer node; second, it distributes the highly popular videos across the peer nodes, fundamentally preventing concurrent capacity bottlenecks at any single peer node.\footnote{Alternative peer node storage initialization strategies are also evaluated in Appendix~\ref{storage_strategy} to further validate the system's robustness.}

\textbf{Methodology and parameter settings.} To systematically investigate the sensitivity of the overall transmission cost to various system scales, resource capacities, and user demand patterns, we employ a controlled variable methodology. As summarized in Table~\ref{tab:parameter_settings}, we sequentially vary a single target parameter across a specified range while strictly maintaining all other system variables at their default values. 

\begin{table}[htbp]
\centering
\caption{Parameter Settings for Controlled Variable Experiments}
\label{tab:parameter_settings}
\begin{tabular*}{\tblwidth}{@{}LLLL@{}}
\toprule
\textbf{Parameter Name} & \textbf{Default} & \textbf{Range} & \textbf{Step} \\
\midrule
Number of users ($U$) & 100 & 50--140 & 10 \\
Number of peer nodes ($N$) & 50 & 10--100 & 10 \\
Number of videos ($V$) & 300 & 100--500 & 50 \\
Number of time slots ($T$) & 10 & 6--20 & 2 \\
\makecell[l]{Storage capacity per \\ peer node ($s$)} & 6 & 2--10 & 1 \\
\makecell[l]{Concurrent service capacity \\per peer node ($d$) }& 2 & 1--5 & 1 \\
Popularity skewness ($\alpha$) & 0.6 & 0.1--1.0 & 0.1 \\
\bottomrule
\end{tabular*}
\end{table}

\subsection{Comparative Algorithms}
To comprehensively evaluate the efficacy of our joint optimization framework and the proposed MMEC algorithm, we benchmark its performance against three representative baselines. These algorithms are carefully selected to represent different degrees of system optimization, serving as both performance benchmarks and ablation studies.

\textbf{Random video Ordering and Random Transmission Scheduling (RORS).} This serves as a naive, fully unoptimized baseline. In RORS, the server generates a completely random playback sequence for each user based on their recommendation set $R_u$. For each required video, the server randomly assigns a service node from the subset of peer nodes that currently cache the content and possess available concurrent capacity.

\textbf{Random video Ordering and Optimal Transmission Scheduling (ROOS).} This algorithm acts as a critical ablation study baseline. It isolates the transmission scheduling optimization while discarding the video ordering optimization. In ROOS, the playback sequence remains randomly generated; however, based on this fixed sequence, the server computes the optimal transmission scheduling to minimize the overall transmission cost. Comparing MMEC with ROOS directly quantifies the immense performance gain unlocked by intelligently permuting the user playback sequence.

\textbf{Simulated Annealing Optimization (SAO).} This baseline represents a classical meta-heuristic approach applied directly to solve our original joint Integer Linear Program (ILP) formulation. It begins with a greedy initial solution where the playback sequences are randomized, and each video request is assigned to the lowest-cost available node. The algorithm then iteratively explores the solution space by randomly swapping video time slots and reassigning service nodes to escape local optima. The SAO configuration is rigorously tuned, utilizing an initial temperature $T_{init} = 1000$, a cooling factor $\gamma = 0.95$, a maximum of $I=1000$ iterations, and $L=20$ inner perturbations per temperature step.

\begin{figure*}
    \centering
    \subfloat[Overall cost vs. number of users $U$.]{
        \label{fig:performance_comparison_num_users}
        \includegraphics[width=0.30\textwidth]{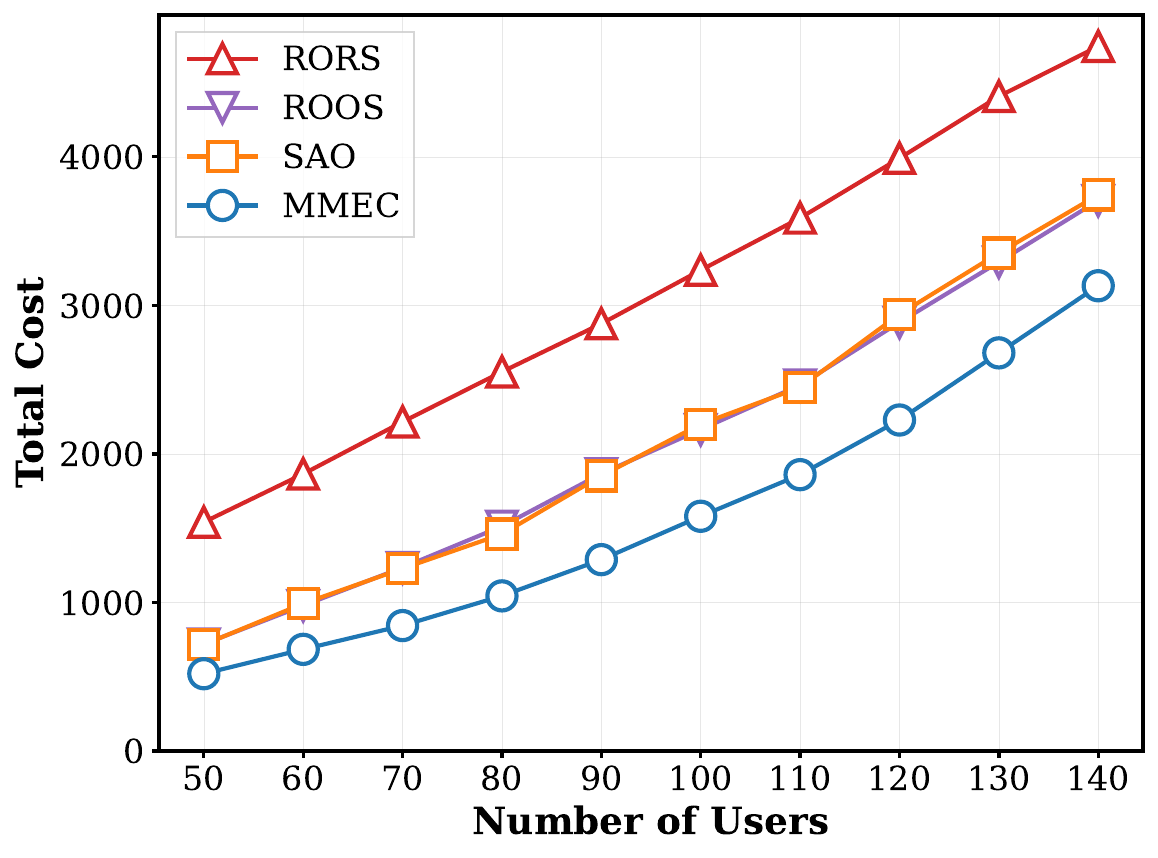}
    }
    \hfill
    \subfloat[Overall cost vs. video library size $V$.]{
        \label{fig:performance_comparison_num_files}
        \includegraphics[width=0.30\textwidth]{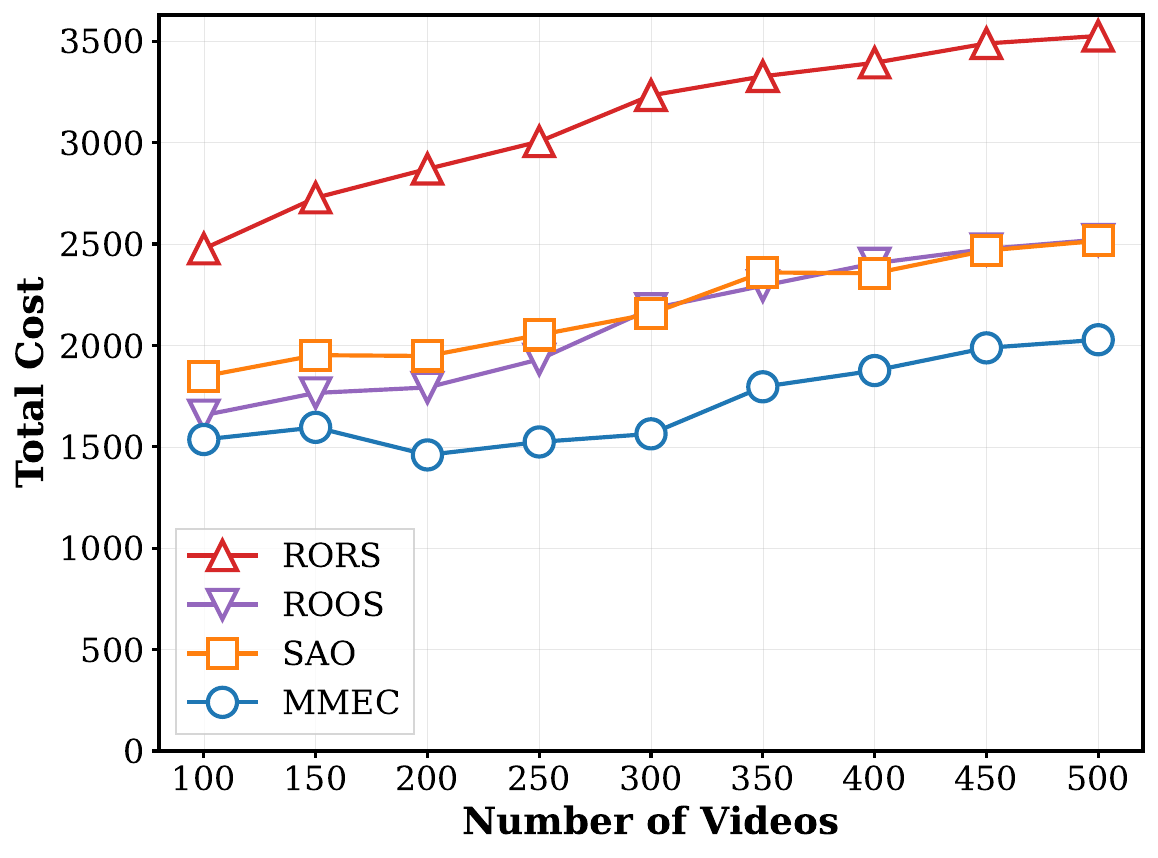}
    }
    \hfill
    \subfloat[Overall cost vs. number of peer nodes $N$.]{
        \label{fig:performance_comparison_num_servers}
        \includegraphics[width=0.30\textwidth]{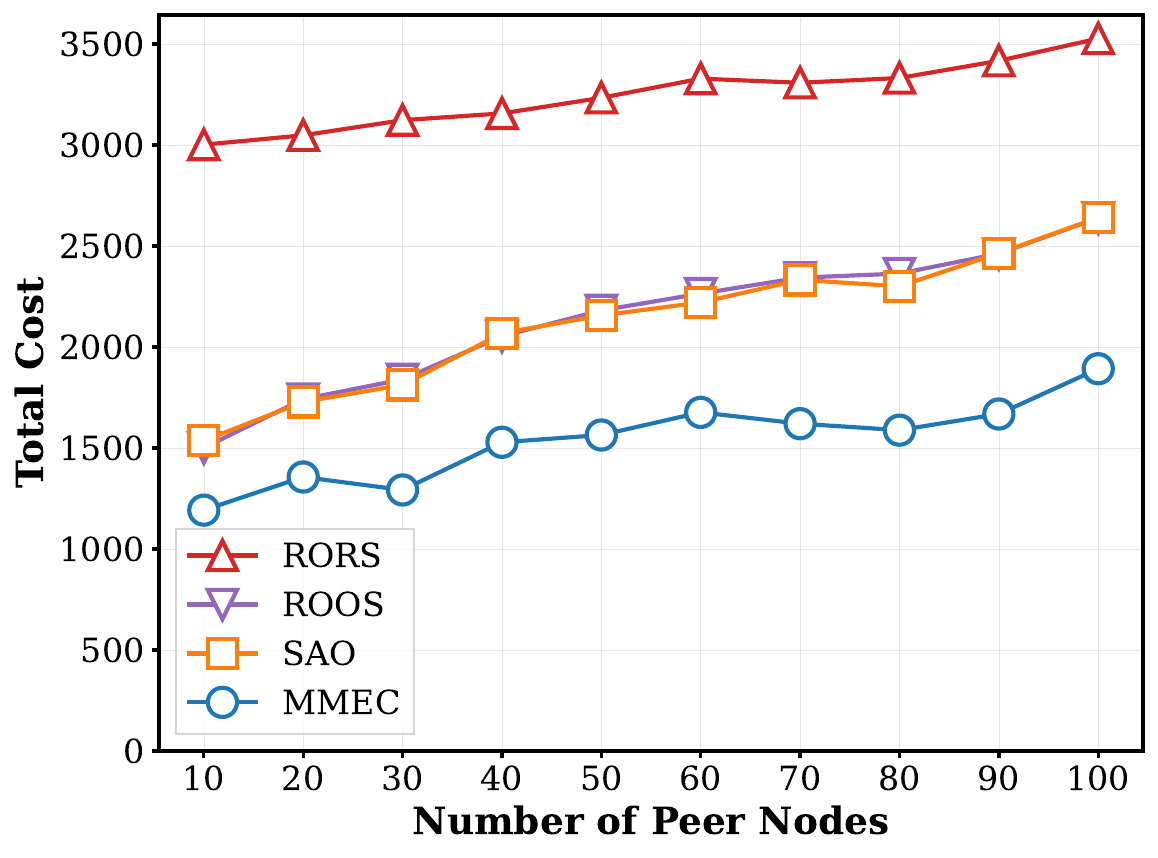}
    } \\
    
    \subfloat[Overall cost vs. number of time slots $T$.]{
        \label{fig:performance_comparison_num_slots}
        \includegraphics[width=0.30\textwidth]{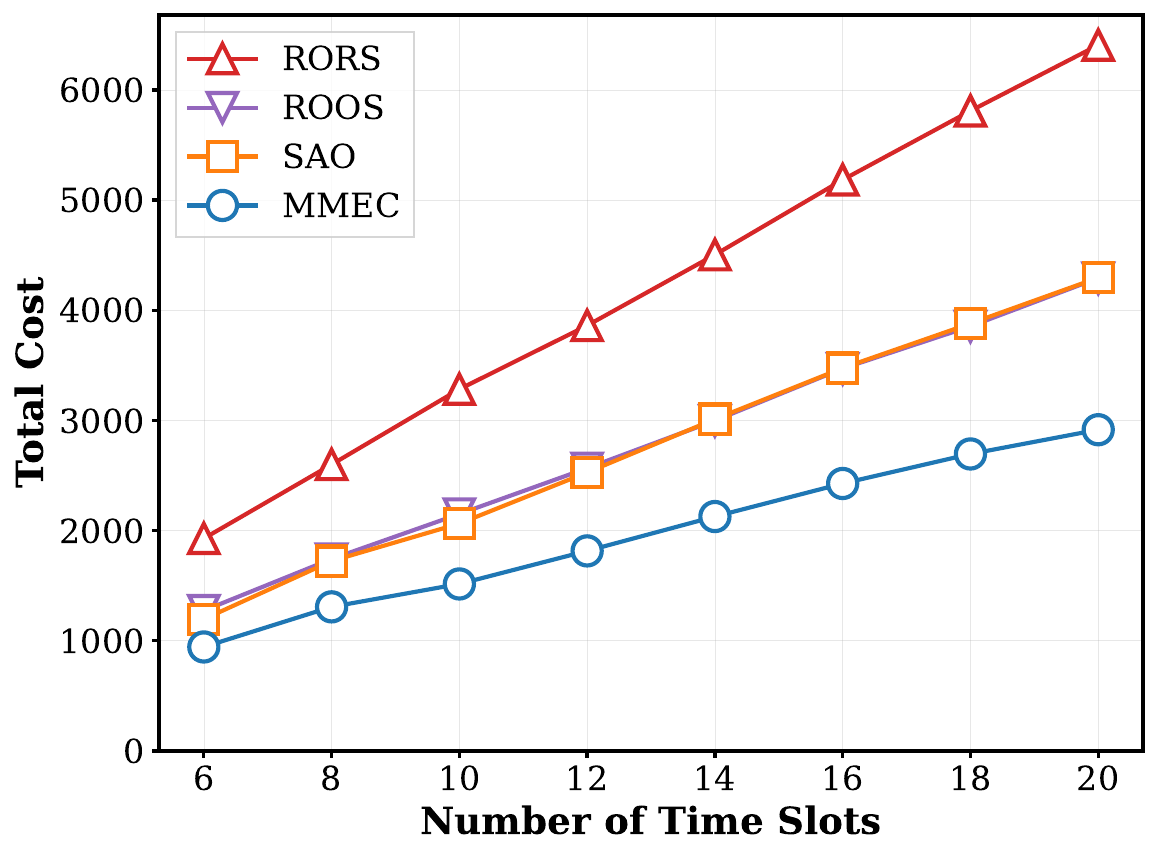}
        }
    \hfill
    \subfloat[Overall cost vs. concurrent service capacity per peer node.]{
        \label{fig:performance_comparison_mcc}
        \includegraphics[width=0.30\textwidth]{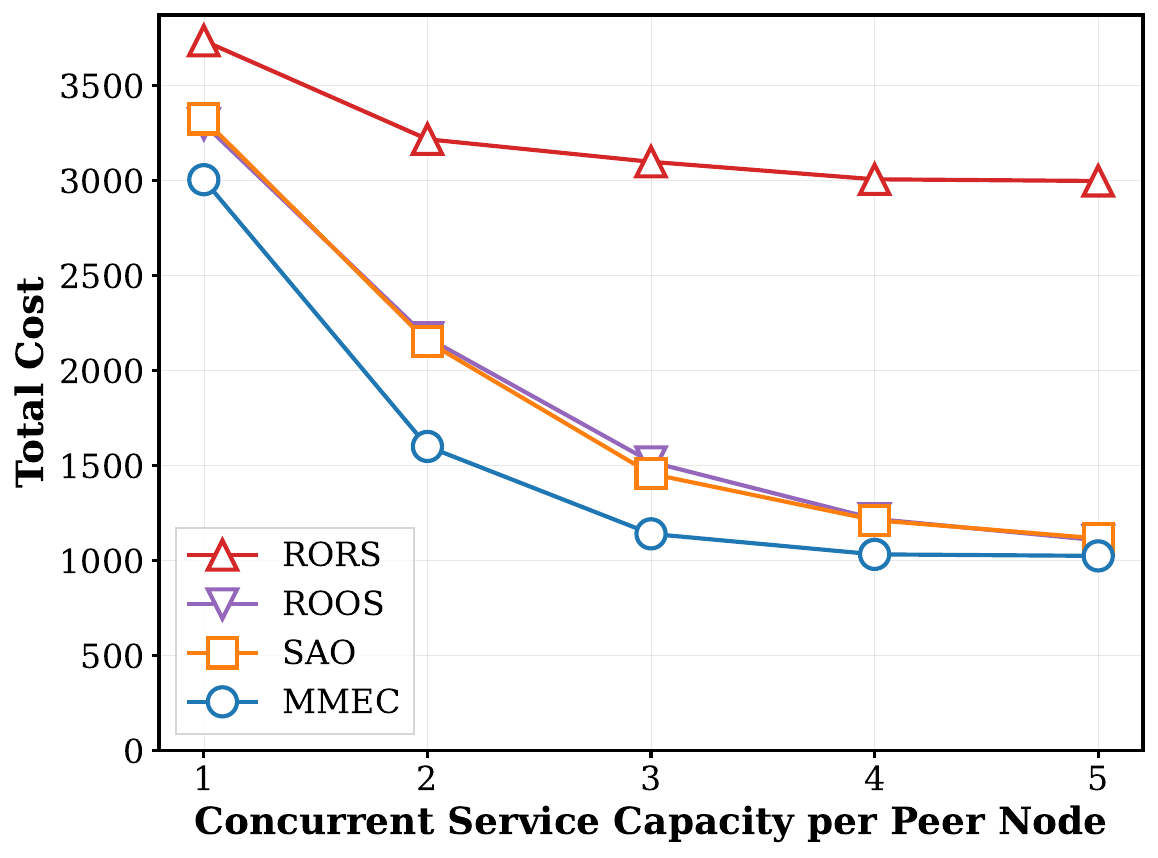}
        }
        \hfill
    \subfloat[Overall cost vs. storage capacity per peer node.]{
        \label{fig:performance_comparison_storage}
        \includegraphics[width=0.30\textwidth]{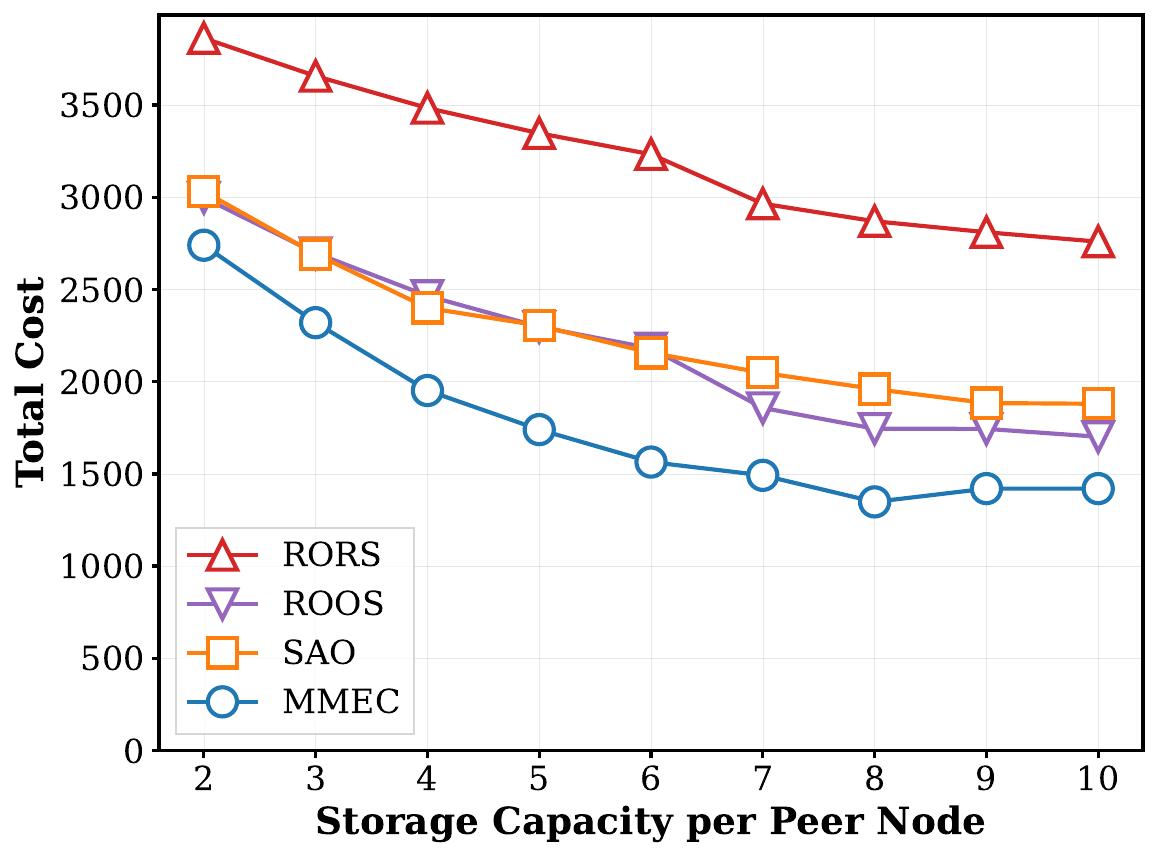}
        }   
    \caption{Performance comparison of the proposed MMEC algorithm against benchmark algorithms (RORS, ROOS, and SAO) in terms of overall transmission cost under varying system parameters.}
\end{figure*}

\subsection{Numerical Results}

\subsubsection{Impact of the Number of Users}
Fig.~\ref{fig:performance_comparison_num_users} illustrates the overall transmission cost as the number of users $U$ scales from 50 to 140. As expected, the cost for all evaluated algorithms increases monotonically with growing video demand, yet the performance gap widens significantly as the system load intensifies. The naive RORS baseline exhibits the most severe cost escalation due to random playback sequences causing severe concurrent congestion at peer nodes. Notably, the chart reveals a near-perfect performance overlap between the ROOS ablation baseline and the SAO meta-heuristic. This result indicates that the highly coupled nature of the joint video ordering and transmission scheduling problem creates a rugged solution space, trapping traditional heuristic searches (SAO) in local optima. In contrast, the proposed MMEC algorithm effectively breaks this algorithmic bottleneck, demonstrating that intelligent video ordering is just as critical as transmission scheduling. By mathematically decoupling these two phases, MMEC intelligently smooths out concurrent demand peaks to maximize the utilization of peer nodes. Consequently, MMEC achieves peak cost reductions of 67\% against RORS at $U=50$ and 36\% against SAO/ROOS at $U = 70$. While severe peer node depletion at $U=140$ inevitably forces reliance on the expensive central CDN and narrows these relative margins, MMEC still sustains significant reductions of 35\% and 17\%, respectively.

\subsubsection{Impact of the Number of Videos}
Fig.~\ref{fig:performance_comparison_num_files} illustrates the overall transmission cost as the video library size $V$ scales from 100 to 500. Across this entire range, MMEC consistently outperforms all baseline algorithms. For all evaluated approaches, the cost ultimately increases since a larger catalog under fixed peer storage capacities inevitably reduces the cache hit rate, forcing reliance on the expensive CDN node. However, MMEC reveals a distinct "cost valley" between $V=100$ and $V=300$, maintaining an absolute cost of approximately 750. This phenomenon occurs because an excessively small $V$ over-concentrates user requests, causing severe concurrent capacity bottlenecks at specific peer nodes. MMEC achieves peak cost reductions of approximately 55\% against RORS and 31\% against the SAO/ROOS cluster precisely at $V=300$. 

\subsubsection{Impact of the Number of Peer Nodes}
Fig.~\ref{fig:performance_comparison_num_servers} illustrates the overall transmission cost as the number of peer nodes $N$ scales from 10 to 100. To rigorously evaluate the algorithm's robustness against resource fragmentation, we assume a fixed total system storage and concurrent service capacity across this experiment. Consequently, increasing $N$ fragments the available resources into smaller, weaker individual nodes, which limits the serving capability per video and drives up the absolute cost across all evaluated algorithms. Notably, RORS, SAO, and ROOS exhibit a sharp cost increase as $N$ grows, indicating that they struggle significantly when the solution space becomes highly fragmented. In contrast, MMEC effectively manages this fragmentation through its mathematically guaranteed global optimization, maintaining the most stable cost trajectory. Consequently, MMEC achieves maximum relative cost reductions of approximately 59\% against RORS at $N=10$, and up to 32\% against SAO/ROOS at $N=70$.

\subsubsection{Impact of the Number of Time Slots}
Fig.~\ref{fig:performance_comparison_num_slots} illustrates the overall transmission cost as the number of time slots $T$ (representing the total video requests per user) scales from 6 to 20. The cost for all evaluated algorithms increases linearly, reflecting the fundamental scaling of system-wide demand as a larger $T$ directly elevates the total transmission volume. The naive RORS baseline exhibits the most rapid cost escalation. The SAO and ROOS baselines once again exhibit near-identical sub-optimal performance. In contrast, MMEC mathematically guarantees a globally optimal sequence regardless of the expanded temporal dimensions. Consequently, at $T=14$, MMEC achieves peak relative cost reductions of approximately 54\% against RORS, and 31\% against both the SAO and ROOS baselines.

\subsubsection{Impact of Concurrent Service Capacity per Peer Node}
Fig.~\ref{fig:performance_comparison_mcc} illustrates the overall transmission cost as the concurrent service capacity per peer node scales from 1 to 5. As expected, the cost for all evaluated algorithms decreases as capacity expands, since a higher concurrent service capacity enables more simultaneous requests to be absorbed by the peer nodes. The naive RORS baseline remains the most expensive and exhibits a relatively shallow decline. The SAO and ROOS baselines once again exhibit identical sub-optimal trajectories. In contrast, MMEC intelligently orchestrates the playback sequences to fully exploit the available capacity, maintaining the lowest absolute cost and forming a distinct plateau when the capacity exceeds 3.
Consequently, MMEC achieves maximum relative cost reductions of approximately 63\% against RORS at $d=5$, and up to 26\% against SAO/ROOS at $d=2$.

\subsubsection{Impact of Storage Capacity per Peer Node}
Fig.~\ref{fig:performance_comparison_storage} illustrates the overall transmission cost as the storage capacity per peer node scales from 2 to 10. As expected, the cost for all evaluated algorithms exhibits a downward trend, since expanded storage allows peer nodes to cache more videos, directly reducing reliance on the expensive CDN node. The naive RORS baseline consistently incurs the highest cost. Interestingly, while the SAO and ROOS baselines overlap initially, ROOS slightly outperforms SAO when storage capacity exceeds 6, indicating that the heuristic SAO struggles to navigate the enlarged solution space. In contrast, MMEC efficiently exploits the expanded storage to drive a rapid cost reduction, eventually forming a distinct plateau when capacity exceeds 8. This stabilization signals that the primary performance bottleneck has now shifted from the storage capacity of peer nodes to their concurrent service capacity. Consequently, MMEC achieves its maximum relative cost reductions peaking at approximately 53\% against RORS and 31\% against the SAO baseline at a capacity of 8, and approximately 28\% against the ROOS baseline at a capacity of 6.

\begin{figure}
\centering
\includegraphics[width=0.9\linewidth]{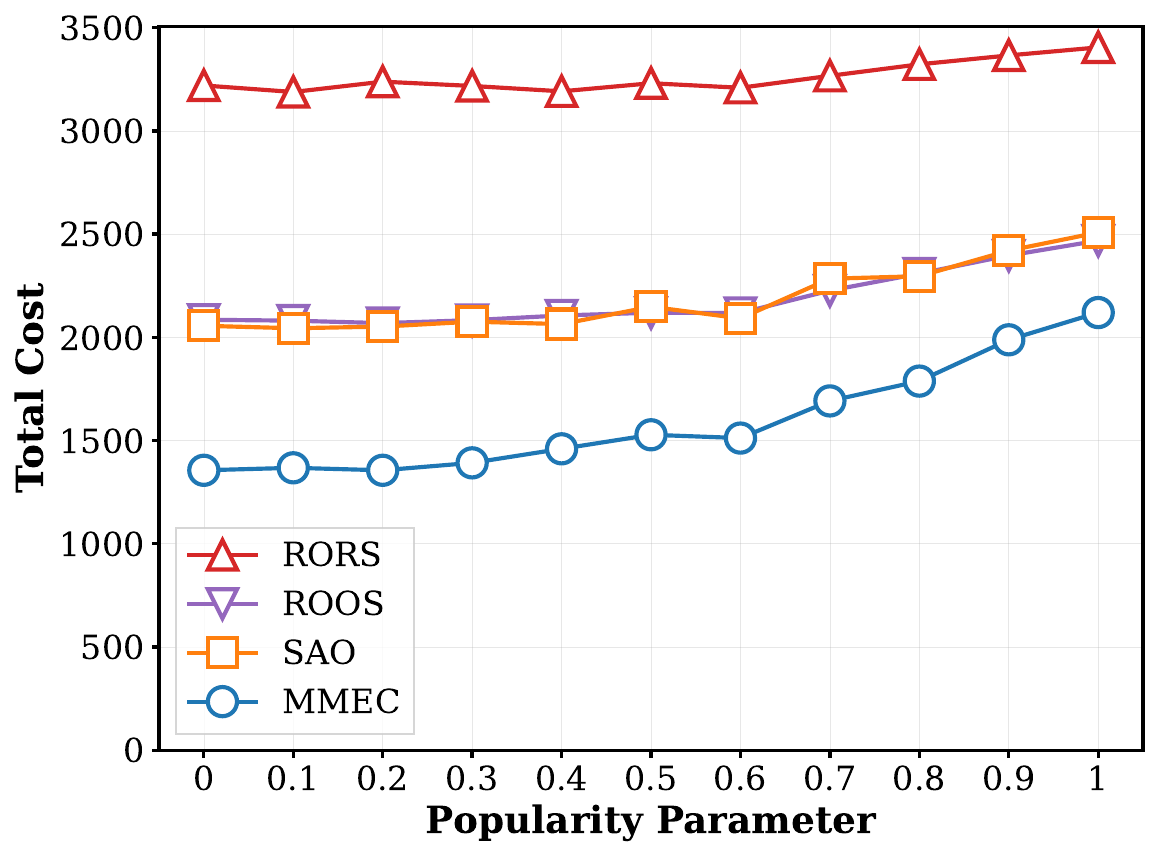}
\caption{Impact of the Zipf popularity skewness parameter $\alpha$ on the total video delivery cost across different algorithms.}
\label{fig:performance_comparison_popularity}
\end{figure}

\subsubsection{Impact of Popularity Skewness}

Fig.~\ref{fig:performance_comparison_popularity} illustrates the overall transmission cost as the Zipf popularity skewness parameter $\alpha$ scales from 0.0 to 1.0. As $\alpha$ increases, user demands become highly concentrated on a subset of hyper-popular videos. This creates severe contention for the concurrent service capacity of the specific peer nodes storing those videos, inevitably forcing the excess requests to the expensive CDN node. The naive RORS baseline consistently incurs the highest cost, while MMEC demonstrates its strongest dominance. Specifically, MMEC achieves its peak relative cost reductions at $\alpha=0.2$, reaching approximately 59\% against RORS and 36\% against both the SAO and ROOS baselines.

\section{Conclusion}
\label{conclusion}
In this paper, we addressed the challenge of reducing transmission costs in short video platforms by proposing a joint optimization framework for video ordering and transmission scheduling. Unlike traditional reactive caching mechanisms that treat user requests as rigid constraints, our approach exploits the inherent flexibility of server-driven recommendation playlist to proactively shape network demand.

By abstracting the core components of the short video delivery architecture, we modeled the short video streaming system and formulated the OVOTS problem to minimize the overall transmission cost. We provided a rigorous theoretical foundation by proving its exact reducibility to an auxiliary MCMF problem, and established a mathematically optimal mapping using K\H{o}nig's Edge Coloring Theorem. Based on these insights, we developed the MMEC algorithm, a computationally efficient, polynomial-time solution that decouples transmission scheduling from video ordering.

Extensive experimental evaluations validated that our framework effectively neutralizes flash-crowd demand spikes, maximizing the utilization of severely constrained peer nodes. The MMEC algorithm demonstrated superior performance and robustness across diverse system scales, concurrent service capacities, and storage limits. Specifically, by achieving peak cost reductions exceeding 67\% against standard baselines, our results underscore that playback sequence permutation is a highly potent, yet previously underexploited, paradigm for reducing transmission cost. Finally, potential directions for future research include extending this joint optimization framework to highly dynamic environments characterized by unpredictable user retention rates and multipath transmission.


%

\appendix

\section{Heterogeneous Peer Nodes}
\label{heterogeneous_peer_node}
In our primary performance evaluation (Section 5), we assumed a homogeneous PCDN environment to ensure mathematical clarity. However, in realistic deployments, peer nodes leased from third-party providers naturally exhibit significant heterogeneity. In this appendix, we extend our evaluation to investigate the system's performance and robustness when peer nodes possess heterogeneous resource capabilities. 

\subsection{Impact of Heterogeneous Storage Capacities}

To evaluate the robustness of our joint optimization framework in realistic, heterogeneous peer nodes, we compare the algorithmic performance under two distinct storage distributions, identical storage capacities (\textit{Uniform}) and randomly varying storage capacities (\textit{Random}), while strictly maintaining an equal total network storage capacity. We conduct 20 independent simulation trials to mitigate random variations. As illustrated in Fig.~\ref{fig:storage_allocation_comparison}, the results indicate that although storage capacities in the \textit{Random} naturally increases the overall transmission cost across all algorithms, the proposed MMEC consistently maintains its high performance. By effectively absorbing the negative impacts of heterogeneous storage distributions, MMEC demonstrates strong adaptability and robustness against the hardware disparities inherent in practical edge deployments.

\begin{figure}
    \centering
    \includegraphics[width=0.9\linewidth]{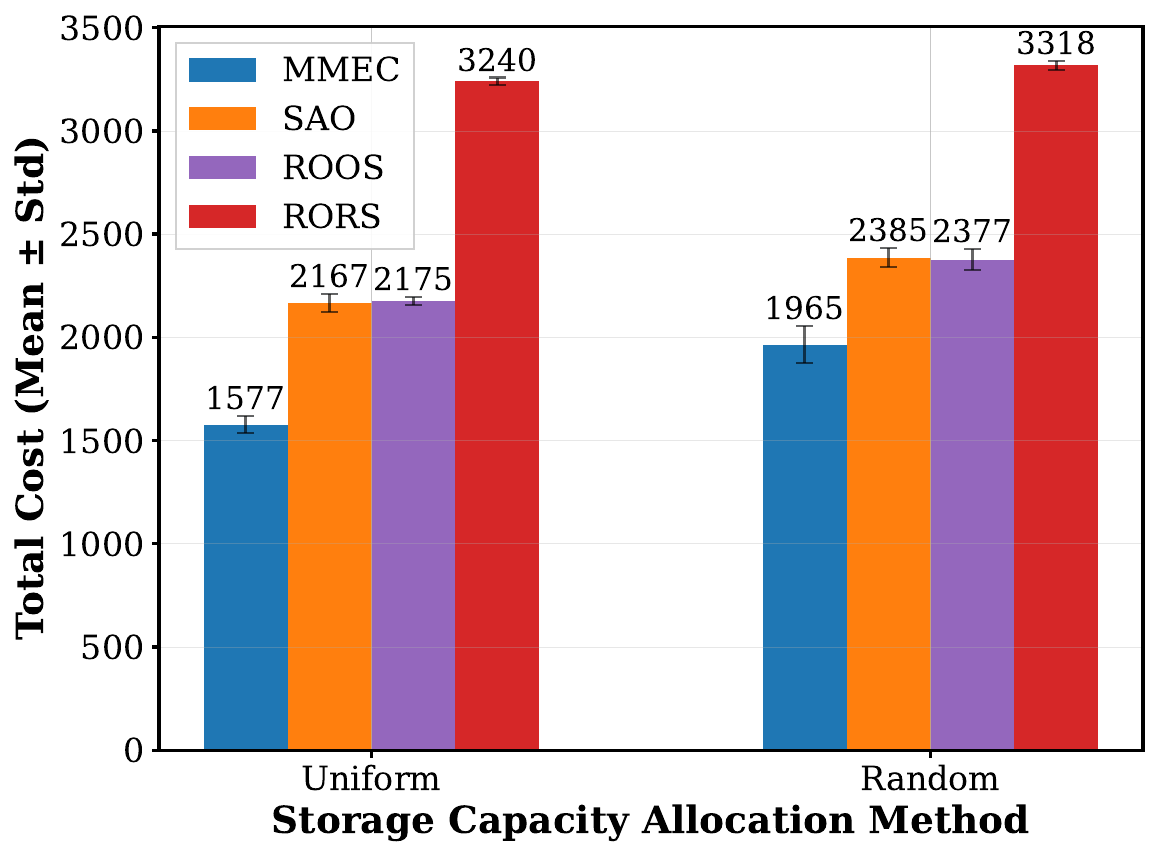}
    \caption{Performance comparison under uniform and random storage capacity allocations. }
    \label{fig:storage_allocation_comparison}
\end{figure}

\subsection{Impact of Heterogeneous Concurrent Service Capacities}

Similarly, to assess the impact of concurrent service capacities heterogeneity, we evaluate the system's performance under uniform and random concurrent service capacity allocations, strictly maintaining an identical total capacity across the peer nodes. As depicted in Fig.~\ref{fig:connection_allocation_comparison}, the results reveal that capacity heterogeneity in the \textit{Random} setting significantly elevates the overall transmission cost for all algorithms. This degradation occurs because nodes caching highly popular content might randomly be constrained by severely low concurrent limits, instantly triggering bottlenecks that force traffic to the expensive CDN. Nevertheless, the proposed MMEC algorithm consistently outperforms the baselines by intelligently permuting playback sequences to stagger user requests across the time domain. This proactive temporal staggering effectively circumvents unpredictable capacity bottlenecks, demonstrating MMEC's exceptional robustness in highly heterogeneous bandwidth environments.

\begin{figure}
    \centering
    \includegraphics[width=0.9\linewidth]{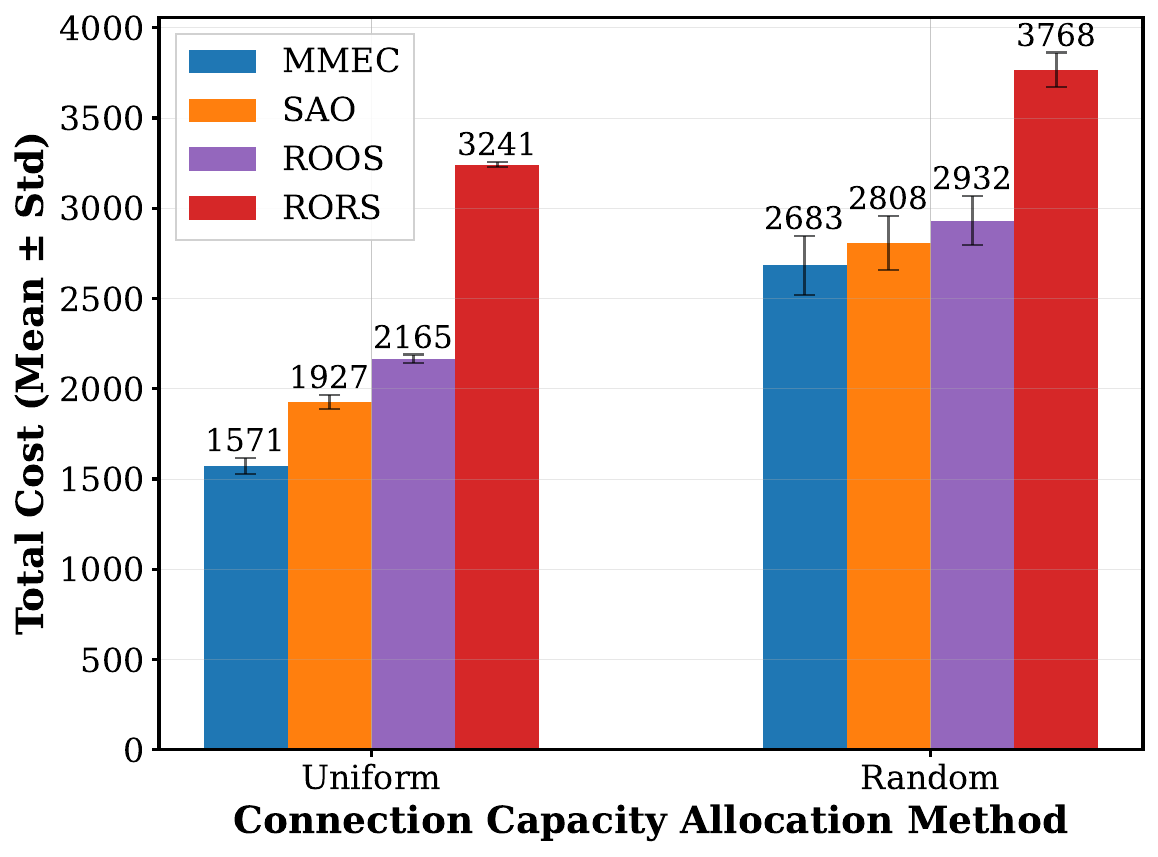}
    \caption{Performance comparison under uniform and random concurrent service capacity allocations.}
    \label{fig:connection_allocation_comparison}
\end{figure}

\section{Peer Node Storage Initialization Strategy}
\label{storage_strategy}

Finally, we investigate the impact of different peer node storage initialization strategies by comparing our \textit{Default} popularity-sorted cyclic placement against two alternative approaches: \textit{Popularity} (where each peer node independently samples videos based on the Zipf distribution) and \textit{Random} (where each node samples videos uniformly at random). As illustrated in Fig.~\ref{fig:storage_method_comparison}, the results demonstrate that both alternative strategies significantly elevate the overall transmission cost. This cost penalty arises because the \textit{Popularity} strategy induces severe redundant caching of a few hyper-popular videos, whereas the \textit{Random} strategy entirely ignores the skewed user demand, resulting in massive cache misses and forced CDN fallbacks. Nevertheless, even under these highly sub-optimal caching settings, the proposed MMEC algorithm consistently maintains the lowest overall transmission cost compared to the RORS, ROOS, and SAO baselines. 

\begin{figure}
    \centering
    \includegraphics[width=0.9\linewidth]{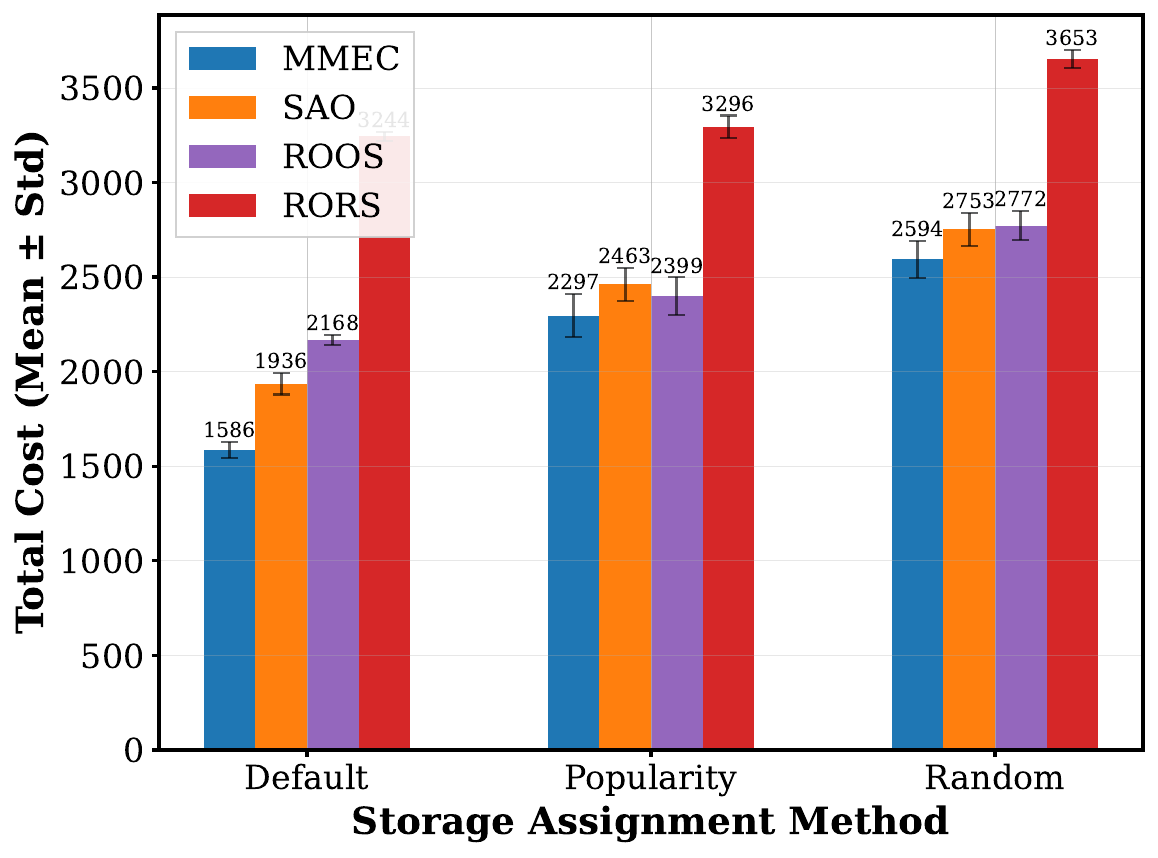}
    \caption{Performance comparison under different storage initialization strategy. }
    \label{fig:storage_method_comparison}
\end{figure}


\section*{Acknowledgment}

This work was supported by the Key Program of the National Natural Science Foundation of China (Grant No. 62531003).

\printcredits

\bibliographystyle{cas-model2-names}

\bibliography{cas-refs}

@book{ford1962flows,
  title={Flows in Networks},
  author={Ford, Lester R. and Fulkerson, Delbert R.},
  year={1962},
  publisher={Princeton University Press},
  address={Princeton, NJ}
}

@article{viola2020predictive,
  title={Predictive CDN selection for video delivery based on LSTM network performance forecasts and cost-effective trade-offs},
  author={Viola, Roberto and Martin, Angel and Morgade, Javier and Masneri, Stefano and Zorrilla, Mikel and Angueira, Pablo and Montalb{\'a}n, Jon},
  journal={IEEE Transactions on Broadcasting},
  volume={67},
  number={1},
  pages={145--158},
  year={2020}
}

@article{peroni2025end,
  title={An end-to-end pipeline perspective on video streaming in best-effort networks: a survey and tutorial},
  author={Peroni, Leonardo and Gorinsky, Sergey},
  journal={ACM Computing Surveys},
  volume={57},
  number={12},
  pages={1--47},
  year={2025}
}

@CONFERENCE{tan2019online,
  title={Online popularity prediction of video segments: Towards more efficient content delivery networks},
  author={Tan, Zhiyi and Hu, Wen and Zhang, Ya and Ding, Hao},
  booktitle={Proc. of IEEE GLOBECOM Conf.},
  pages={1--6},
  year={2019}
}

@CONFERENCE{shang2025large,
  title={A Large-scale Dataset with Behavior, Attributes, and Content of Mobile Short-video Platform},
  author={Shang, Yu and Gao, Chen and Li, Nian and Li, Yong},
  booktitle={Proc. of ACM Web Conf.},  
  pages={793--796},
  year={2025}
}

@article{huang2023digital,
  title={Digital twin based user-centric resource management for multicast short video streaming},
  author={Huang, Xinyu and Wu, Wen and Hu, Shisheng and Li, Mushu and Zhou, Conghao and Shen, Xuemin},
  journal={IEEE Journal of Selected Topics in Signal Processing},
  volume={18},
  number={1},
  pages={50--65},
  year={2023}
}

@article{nygren2010akamai,
  title={The akamai network: a platform for high-performance internet applications},
  author={Nygren, Erik and Sitaraman, Ramesh K and Sun, Jennifer},
  journal={ACM SIGOPS Operating Systems Review},
  volume={44},
  number={3},
  pages={2--19},
  year={2010}
}

@article{wei2023multipath,
  title={Multipath smart preloading algorithms in short video peer-to-peer CDN transmission architecture},
  author={Wei, Dehui and Zhang, Jiao and Li, Haozhe and Xue, Zhichen and Peng, Yajie and Han, Rui},
  journal={IEEE Network},
  volume={38},
  number={3},
  pages={285--291},
  year={2023}
}

@article{wei2025qoe,
 author={Wei, Dehui and Zhang, Jiao and Liu, Xiang and Li, Haozhe and Xue, Zhichen and Huang, Tao and Jiang, Linshan and Li, Jialin},
  journal={IEEE Journal on Selected Areas in Communications}, 
  title={QoE-Optimized MultiPath Scheduling for Video Services in Large-Scale Peer-to-Peer CDNs}, 
  year={2025},
  volume={43},
  number={2},
  pages={563-576}
}

@CONFERENCE{wei2024pscheduler,
  title={Pscheduler: Qoe-enhanced MultiPath scheduler for video services in large-scale peer-to-peer CDNs},
  author={Wei, Dehui and Zhang, Jiao and Li, Haozhe and Xue, Zhichen and Peng, Yajie and Pang, Xiaofei and Liu, Yuanjie and Han, Rui and Li, Jialin},
  booktitle={Proc. of IEEE INFOCOM Conf.},
  pages={2508--2517},
  year={2024}
}

@article{farahani2023alive,
  title={Alive: A latency-and cost-aware hybrid p2p-cdn framework for live video streaming},
  author={Farahani, Reza and {\c{C}}etinkaya, Ekrem and Timmerer, Christian and Shojafar, Mohammad and Ghanbari, Mohammad and Hellwagner, Hermann},
  journal={IEEE Transactions on Network and Service Management},
  volume={21},
  number={2},
  pages={1561--1580},
  year={2023}
}

@CONFERENCE{zhang2024enhancing,
  title={Enhancing resource management of the world's largest PCDN system for on-demand video streaming},
  author={Zhang, Rui-Xiao and Wang, Haiping and Shi, Shu and Pang, Xiaofei and Peng, Yajie and Xue, Zhichen and Liu, Jiangchuan},
  booktitle={Proc. of USENIX ATC Conf.},
  pages={951--965},
  year={2024}
}

@article{wang2024twist,
  title={Twist: A Multi-site Transmission Solution for On-demand Video Streaming},
  author={Wang, Haiping and Zhang, Ruixiao and Li, Chaojun and Xue, Zhichen and Peng, Yajie and Pang, Xiaofei and Zhang, Yixuan and Ren, Shaorui and Shi, Shu},
  journal={Proceedings of the ACM on Networking},
  volume={2},
  number={CoNEXT2},
  pages={1--19},
  year={2024}
}

@article{huang2024digital,
  title={Digital twin-based network management for better QoE in multicast short video streaming},
  author={Huang, Xinyu and Hu, Shisheng and Yang, Haojun and Wang, Xinghan and Pei, Yingying and Shen, Xuemin},
  journal={IEEE Transactions on Wireless Communications},
  year={2024},
  volume={23},
  number={11},
  pages={16187-16202}
}

@CONFERENCE{gao2022short,
  title={Short Video List Reshuffling for Minimized Wireless Resources through Video Multicast},
  author={Gao, Zhipeng and Li, Chunxi and Zhao, Yongxiang and Zhang, Baoxian and Li, Cheng},
  booktitle={Proc. of IEEE IWCMC Conf.},
  pages={943--948},
  year={2022}
}

@CONFERENCE{simonovski2025swiping,
  title={Swiping, Fast and Slow: Assessing the QoE of Short-Form Videos via Crowdsourcing},
  author={Simonovski, Filip and Hufen, Samuel and Karl, Lisa and Sayin, Alperen and Wehner, Nikolas and Ho{\ss}feld, Tobias and Seufert, Michael},
  booktitle={Proc. of IEEE QoMEX Conf.},
  pages={1--7},
  year={2025}
}

@article{gao2025startup,
  title={Startup delay aware short video ordering: Problem, model, and a reinforcement learning based algorithm},
  author={Gao, Zhipeng and Li, Chunxi and Zhao, Yongxiang and Zhang, Baoxian and Li, Cheng},
  journal={Peer-to-Peer Networking and Applications},
  volume={18},
  number={2},
  pages={74},
  year={2025}
}

@CONFERENCE{gong2022real,
  title={Real-time short video recommendation on mobile devices},
  author={Gong, Xudong and Feng, Qinlin and Zhang, Yuan and Qin, Jiangling and Ding, Weijie and Li, Biao and Jiang, Peng and Gai, Kun},
  booktitle={Proc. of ACM CIKM Conf.},
  pages={3103--3112},
  year={2022}
}

@article{quan2023alleviating,
  title={Alleviating video-length effect for micro-video recommendation},
  author={Quan, Yuhan and Ding, Jingtao and Gao, Chen and Li, Nian and Yi, Lingling and Jin, Depeng and Li, Yong},
  journal={ACM Transactions on Information Systems},
  volume={42},
  number={2},
  pages={1--24},
  year={2023}
}

@article{chen2025maximum,
  title={Maximum flow and minimum-cost flow in almost-linear time},
  author={Chen, Li and Kyng, Rasmus and Liu, Yang and Peng, Richard and Probst Gutenberg, Maximilian and Sachdeva, Sushant},
  journal={Journal of the ACM},
  volume={72},
  number={3},
  pages={1--103},
  year={2025}
}

@article{schrijver1998bipartite,
  title={Bipartite edge coloring in O($\Delta$m) time},
  author={Schrijver, Alexander},
  journal={SIAM Journal on Computing},
  volume={28},
  number={3},
  pages={841--846},
  year={1998}
}

@article{parimala2021bellman,
  title={Bellman--Ford algorithm for solving shortest path problem of a network under picture fuzzy environment},
  author={Parimala, Mani and Broumi, Said and Prakash, Karthikeyan and Topal, Selcuk},
  journal={Complex \& Intelligent Systems},
  volume={7},
  number={5},
  pages={2373--2381},
  year={2021}
}

@article{robertson1997four,
  title={The four-colour theorem},
  author={Robertson, Neil and Sanders, Daniel and Seymour, Paul and Thomas, Robin},
  journal={journal of combinatorial theory, Series B},
  volume={70},
  number={1},
  pages={2--44},
  year={1997}
}

@CONFERENCE{ghabashneh2020exploring,
  title={Exploring the interplay between CDN caching and video streaming performance},
  author={Ghabashneh, Ehab and Rao, Sanjay},
  booktitle={Proc. of IEEE INFOCOM Conf.},
  pages={516--525},
  year={2020}
}

@article{jiang2009efficient,
  title={Efficient large-scale content distribution with combination of CDN and P2P networks},
  author={Jiang, Hai and Li, Jun and Li, Zhongcheng and Bai, Xiangyu},
  journal={International Journal of Hybrid Information Technology},
  volume={2},
  number={2},
  pages={4},
  year={2009}
}

@CONFERENCE{farahani2022hybrid,
  title={Hybrid P2P-CDN architecture for live video streaming: An online learning approach},
  author={Farahani, Reza and Bentaleb, Abdelhak and {\c{C}}etinkaya, Ekrem and Timmerer, Christian and Zimmermann, Roger and Hellwagner, Hermann},
  booktitle={Proc. of IEEE GLOBECOM Conf.},
  pages={1911--1917},
  year={2022}
}

@article{zhang2014unreeling,
  title={Unreeling Xunlei Kankan: Understanding hybrid CDN-P2P video-on-demand streaming},
  author={Zhang, Ge and Liu, Wei and Hei, Xiaojun and Cheng, Wenqing},
  journal={IEEE Transactions on Multimedia},
  volume={17},
  number={2},
  pages={229--242},
  year={2014}
}

@misc{Gurobi,
  author = {Gurobi Optimization, LLC. },
  title  = {The leader in Decision Intelligence Technology},
  howpublished = "\url{https://www.gurobi.com/}",
  year = {2026},
   note = {Accessed: 2026-02-24}
}

@misc{OR-tools,
  author = {Google},
  title  = {OR-tools | google for developers},
  howpublished = "\url{https://developers.google.com/optimization}",
  year = {2026},
   note = {Accessed: 2026-02-24}
}

@misc{youtube,
  author = {YouTube},
  title  = {
YouTube},
  howpublished = "\url{https://www.youtube.com/}",
  year = {2026},
   note = {Accessed: 2026-02-24}
}

@CONFERENCE{sun2023kuaisar,
  title={KuaiSar: A unified search and recommendation dataset},
  author={Sun, Zhongxiang and Si, Zihua and Zang, Xiaoxue and Leng, Dewei and Niu, Yanan and Song, Yang and Zhang, Xiao and Xu, Jun},
  booktitle={Proc. of ACM CIKM Conf.},
  pages={5407--5411},
  year={2023}
}

@misc{netflix,
  author = {Netflix},
  title  = {Unlimited movies, TV shows, and more},
  howpublished = "\url{https://www.netflix.com/}",
  year = {2026},
   note = {Accessed: 2026-02-24}
}

@misc{kwai,
  author = {kwai},
  title  = {Kwai is a social network for short videos and trends. },
  howpublished = "\url{https://www.kwai.com/}",
  year = {2026},
   note = {Accessed: 2026-02-24}
}

@misc{pptv,
  author = {PPTV},
  title  = {PPTV },
  howpublished = "\url{https://www.pptv.com/}",
  year = {2026},
   note = {Accessed: 2026-02-24}
}

@article{bag2019efficient,
  title={An efficient recommendation generation using relevant Jaccard similarity},
  author={Bag, Sujoy and Kumar, Sri Krishna and Tiwari, Manoj Kumar},
  journal={Information Sciences},
  volume={483},
  number = {C},
  pages={53--64},
  year={2019}
}

@CONFERENCE{liu2011novasky,
  title={Novasky: Cinematic-quality VoD in a P2P storage cloud},
  author={Liu, Fangming and Shen, Shijun and Li, Bo and Li, Baochun and Yin, Hao and Li, Sanli},
  booktitle={Proc. of IEEE INFOCOM Conf.},
  pages={936--944},
  year={2011}
}

@CONFERENCE{jung2002flash,
  title={Flash crowds and denial of service attacks: Characterization and implications for CDNs and web sites},
  author={Jung, Jaeyeon and Krishnamurthy, Balachander and Rabinovich, Michael},
  booktitle={Proc. of ACM WWW Conf.},
  pages={293--304},
  year={2002}
}








\end{document}